\documentclass[11pt]{myclass}

\usepackage{mystyle}
\usepackage[outercaption]{sidecap}
\usepackage{amsmath}
\usepackage{graphics}
\usepackage{graphicx}

\newcommand{\Paragraph}[1]{\paragraph{\sffamily \textbf{#1}}}

\title{Subsampling in Smoothed Range Spaces}
\author{
Jeff M. Phillips \\ {\small\textsl{jeffp@cs.utah.edu}} \\  {\small  University of Utah} \and
Yan Zheng \\ {\small\textsl{yanzheng@cs.utah.edu}} \\  {\small University of Utah}
}
\date{}


\begin{document}
\maketitle

\begin{abstract}
We consider smoothed versions of geometric range spaces, so an element of the ground set (e.g. a point) can be contained in a range with a non-binary value in $[0,1]$.  Similar notions have been considered for kernels; we extend them to more general types of ranges.  We then consider approximations of these range spaces through $\eps$-nets and $\eps$-samples (aka $\eps$-approximations).  We characterize when size bounds for $\eps$-samples on kernels can be extended to these more general smoothed range spaces.  We also describe new generalizations for $\eps$-nets to these range spaces and show when results from binary range spaces can carry over to these smoothed ones.  
\end{abstract}

\section{Introduction}
\label{sec:intro}

This paper considers traditional sample complexity problems but adapted to when the range space (or function space) smoothes out its boundary.  This is important in various scenarios where either the data points or the measuring function is noisy.  Similar problems have been considered in specific contexts of functions classes with a $[0,1]$ range or kernel density estimates.  We extend and generalize various of these results, motivated by scenarios like the following.  

\begin{itemize}
\item[(S1)]
Consider maintaining a random sample of noisy spatial data points (say twitter users with geo-coordinates), and we want this sample to include a witness to every large enough event.  However, because the data coordinates are noisy we use a kernel density estimate to represent the density.  And moreover, we do not want to consider regions with a single or constant number of data points which only occurred due to random variations.  In this scenario, how many samples do we need to maintain?  

\item[(S2)]
Next consider a large approximate  (say high-dimensional image feature~\cite{PiCoDes11}) dataset, where we want to build a linear classifier.  Because the features are approximate (say due to feature hashing techniques), we model the classifier boundary to be randomly shifted using Gaussian noise.  How many samples from this dataset do we need to obtain a desired generalization bound?  

\item[(S3)]
Finally, consider one of these scenarios in which we are trying to create an informative subset of the enormous full dataset, but have the opportunity to do so in ways more intelligent than randomly sampling.  On such a reduced dataset one may want to train several types of classifiers, or to estimate the density of various subsets.  Can we generate a smaller dataset compared to what would be required by random sampling?  
\end{itemize}

The traditional way to study related sample complexity problems is through 
range spaces (a ground set $X$, and family of subsets $\c{A}$) and their associated dimension (e.g., VC-dimension~\cite{VC71}).  
We focus on a smooth extension of range spaces defined on a geometric ground set.  Specifically, consider the ground set $P$ to be a subset of points in $\b{R}^d$, and let $\c{A}$ describe subsets defined by some geometric objects, for instance a halfspace or a ball.  Points $p \in \b{R}^d$ that are inside the object (e.g., halfspace or ball) are typically assigned a value $1$, and those outside a value $0$. In our smoothed setting points near the boundary are given a value between $0$ and $1$, instead of discretely switching from $0$ to $1$.  

In learning theory these smooth range spaces can be characterized by more general notions called $P$-dimension~\cite{Pol90} (or Pseudo dimension) or $V$-dimension~\cite{Vap89} (or ``fat'' versions of these~\cite{ABCH97}) and can be used to learn real-valued functions for regression or density estimation, respectively.

In geometry and data structures, these smoothed range spaces are of interest in studying noisy data.   
Our work extends some recent work~\cite{JoshiKommarajuPhillips2011,Phillips2013} which examines a special case of our setting that maps to kernel density estimates, and matches or improves on related bounds for non-smoothed versions.  

\Paragraph{Main contributions.}
We next summarize the main contributions in this paper. 
\begin{itemize}
\item[$\bullet$]
We define a general class of \emph{smoothed range spaces} (Sec \ref{sec:smooth-RS}), with application to density estimation and noisy agnostic learning, and we show that these can inherit sample complexity results from \emph{linked} non-smooth range spaces (Corollary \ref{cor:inher-sample}).  
\item[$\bullet$] 
We define an $(\eps,\tau)$-net for a smoothed range space (Sec \ref{sec:et-net}).  We show how this can inherit sampling complexity bounds from \emph{linked} non-smooth range spaces (Theorem \ref{thm:inher-net}), and we relate this to non-agnostic density estimation and hitting set problems.  
\item[$\bullet$]
We provide discrepancy-based bounds and constructions for $\eps$-samples on smooth range spaces requiring significantly fewer points than uniform sampling approaches (Theorems \ref{thm:epssample} and \ref{thm:epssample-d}), and also smaller than discrepancy-based bounds on the linked binary range spaces.  
These are useful for batched active learning, where a prespecified batch of (not uniform at random) samples can be then asked for labels to be used for learning.  
\end{itemize}

\section{Definitions and Background}
\label{sec:definitions}

Recall that we will focus on geometric range spaces $(P,\c{A})$ where the ground set $P \subset \b{R}^d$ and the family of ranges $\c{A}$ are defined by geometric objects.
It is common to approximate a range space in one of two ways, as an $\eps$-sample (aka $\eps$-approximation) or an $\eps$-net.  
An \emph{$\eps$-sample} for a range space $(P,\c{A})$ is a subset $Q \subset P$ such that 
\[
\max_{A \in \c{A}} \left| \frac{|A \cap P|}{|P|} - \frac{|Q \cap A|}{|Q|} \right| \leq \eps.
\]
An \emph{$\eps$-net} of a range space $(P,\c{A})$ is a subset $Q \subset P$ such that 
\[
\textrm{for all } A \in \c{A} \textrm{ such that } \frac{|P \cap A|}{|P|} \geq \eps \textrm{ then } A \cap Q \neq \emptyset.
\]
Given a range space $(P,\c{A})$ where $|P| =m$, then $\pi_\c{A}(m)$ describes the maximum number of possible distinct subsets of $P$ defined by some $A \in \c{A}$.  If we can bound, $\pi_{\c{A}}(m) \leq C m^\nu$ for absolute constant $C$, then $(P,\c{A})$ is said to have \emph{shatter dimension} $\nu$.  
For instance the shatter dimension of $\c{H}$ halfspaces in $\b{R}^d$ is $d$, and for $\c{B}$ balls in $\b{R}^d$ is $d+1$.  
For a range space with shatter dimension $\nu$, a random sample of size $O((1/\eps^2)(\nu+\log(1/\delta)))$ is an $\eps$-sample with probability at least $1-\delta$~\cite{VC71,LLS01}, and a random sample of size $O((\nu/\eps) \log(1/\eps \delta))$ is an $\eps$-net with probability at least $1-\delta$~\cite{HW87,Path95}.  

An $\eps$-sample $Q$ is sufficient for agnostic learning with generalization error $\eps$, where the best classifier might misclassify some points.  An $\eps$-net $Q$ is sufficient for non-agnostic learning with generalization error $\eps$, where the best classifier is assumed to have no error on $P$.  

The size bounds can be made deterministic and slightly improved for certain cases.  
An $\eps$-sample $Q$ can be made of size $O(1/\eps^{2\nu/(\nu+1)})$~\cite{Mat95} and this bound can be no smaller~\cite{Mat99} in the general case.  For balls $\c{B}$ in $\b{R}^d$ which have shatter-dimension $\nu=d+1$, this can be improved to $O(1/\eps^{2d/(d+1)} \log^{d/(d+1)}(1/\eps))$~\cite{Bec87,Mat99}, and the best known lower bound is $\Omega(1/\eps^{2d/(d+1)})$.  For axis-aligned rectangles $\c{R}$ in $\b{R}^d$ which have shatter-dimension $\nu=2d$, this can be improved to $O((1/\eps) \log^{d+1/2} (1/\eps))$~\cite{Lar11}.  

For $\eps$-nets, the general bound of $O((\nu/\eps) \log(1/\eps))$ can also be made deterministic~\cite{Mat95}, and for halfspaces in $\b{R}^4$ the size must be at least $\Omega((1/\eps) \log (1/\eps))$~\cite{PT13}.  But for halfspaces in $\b{R}^3$ the size can be $O(1/\eps)$~\cite{MSW90,HKSS14}, which is tight.  By a simple lifting, this also applies for balls in $\b{R}^2$.  For other range spaces, such as axis-aligned rectangles in $\b{R}^2$, the size bound is $\Theta((1/\eps) \log\log(1/\eps))$~\cite{AES10,PT13}.

\subsection{Kernels}
\label{sec:kernels}
A \emph{kernel} is a bivariate similarity function $K : \b{R}^d \times \b{R}^d \to \b{R}^+$, which can be normalized so $K(x,x) = 1$ (which we assume through this paper).  Examples include 
ball kernels ($K(x,p) = \{1$ if  $\|x-p\| \leq 1$ and $0$ otherwise\}), 
triangle kernels ($K(x,p) = \max \{0, 1- \|x-p\|\}$), 
Epanechnikov kernels ($K(x,p) = \max \{0, 1- \|x-p\|^2\}$), and 
Gaussian kernels ($K(x,p) = \exp(-\|x-p\|^2)$, which is reproducing).  
In this paper we focus on symmetric, shift invariant kernels which depend only on $z = \|x-p\|$, and can be written as a single parameter function $K(x,p) = k(z)$; these can be parameterized by a single bandwidth (or just width) parameter $w$ so $K(x,p) = k_w(\|x-p\|/w)$.  

Given a point set $P \subset \b{R}^d$ and a kernel, a \emph{kernel density estimate} $\kde_P$ is the convolution of that point set with $K$.  For any $x \in \b{R}^d$ we define $\kde_P(x) = \frac{1}{|P|} \sum_{p \in P} K(x,p)$.  

A \emph{kernel range space}~\cite{JoshiKommarajuPhillips2011,Phillips2013} $(P,\c{K})$ is an extension of the combinatorial concept of a range space $(P,\c{A})$ (or to distinguish it we refer to the classic notion as a \emph{binary range space}).  
It is defined by a point set $P \subset \b{R}^d$ and a kernel $K$.  An element $K_x$ of $\c{K}$ is a kernel $K(x,\cdot)$ applied at point $x \in \b{R}^d$; it assigns a value in $[0,1]$ to each point $p \in P$ as $K(x,p)$.  
If we use a ball kernel, then each value is exactly $\{0,1\}$ and we recover exactly the notion of a binary range space for geometric ranges defined by balls.  

The notion of an $\eps$-kernel sample~\cite{JoshiKommarajuPhillips2011} extends the definition of $\eps$-sample.  It is a subset $Q \subset P$ such that 
\[
\max_{x \in \b{R}^d} \left| \kde_P(x) - \kde_Q(x) \right| \leq \eps.
\]

A binary range space $(P,\c{A})$ is \emph{linked} to a kernel range space $(P,\c{K})$ if the set $\{p \in P \mid K(x,p) \geq \tau\}$ is equal to $P \cap A$ for some $A \in \c{A}$, for any threshold value $\tau$.  
\cite{JoshiKommarajuPhillips2011} showed that an $\eps$-sample of a linked range space $(P,\c{A})$ is also an $\eps$-kernel sample of a corresponding kernel range space $(P,\c{K})$.  Since all range spaces defined by symmetric, shift-invariant kernels are linked to range spaces defined by balls, they inherit all $\eps$-sample bounds, including that random samples of size $O((1/\eps^2) (d + \log(1/\delta))$ provide an $\eps$-kernel sample with probability at least $1-\delta$.  
Then \cite{Phillips2013} showed that these bounds can be improved through discrepancy-based methods to $O(((1/\eps)\sqrt{\log(1/\eps\delta)})^{2d/(d+2)})$, which is $O((1/\eps) \sqrt{\log(1/\eps\delta)})$ in $\b{R}^2$.  

A more general concept has been studied in learning theory on real-valued functions, where a function $f$ as a member of a function class $\c{F}$ describes a mapping from $\b{R}^d$ to $[0,1]$ (or more generally $\b{R}$).  
A kernel range space where the linked binary range space has bounded shatter-dimension $\nu$ is said to have bounded V-dimension \cite{Vap89} (see \cite{ABCH97}) of $\nu$.  Given a ground set $X$, then for $(X,\c{F})$ this describes the largest subset $Y$ of $X$ which can be shattered in the following sense.  Choose any value $s \in [0,1]$ for all points $y \in Y$, and then for each subset of $Z \subset Y$ there exists a function $f \in \c{F}$ so $f(y) > s$ if $y \in Z$ and $f(y) < s$ if $y \notin Z$.  
The best sample complexity bounds for ensuring $Q$ is an $\eps$-sample of $P$ based on V-dimension are derived from a more general sort of dimension (called a P-dimension \cite{Pol90} where in the shattering definition, each $y$ may have a distinct $s(y)$ value) requires $|Q| = O((1/\eps^2) (\nu + \log(1/\delta)))$ \cite{LLS01}.  
As we will see, these V-dimension based results are also general enough to apply to the to-be-defined smooth range spaces. 

\section{New Definitions}
\label{sec:newdef}
In this paper we extend the notion of a kernel range spaces to other \emph{smoothed range spaces} that are ``linked'' with common range spaces, e.g., halfspaces.  These inherent the construction bounds through the linking result of \cite{JoshiKommarajuPhillips2011}, and we show cases where these bounds can also be improved.  We also extend the notion of $\eps$-nets to kernels and smoothed range spaces, and showing linking results for these as well.

\subsection{Smoothed Range Spaces}
\label{sec:smooth-RS}
Here we will define the primary smoothed combinatorial object we will examine, starting with halfspaces, and then generalizing.  
Let $\c{H}_w$ denote the family of \emph{smoothed halfspaces with width parameter $w$}, and let $(P,\c{H}_w)$ be the associated smoothed range space where $P \subset \b{R}^d$.
Given a point $p \in P$, then smoothed halfspace $h \in \c{H}_w$ maps $p$ to a value $v_h(p) \in [0,1]$ (rather than the traditional $\{0,1\}$ in a binary range space).  

We first describe a specific mapping to the function value $v_h(p)$ that will be sufficient for the development of most of our techniques.  Let $F$ be the $(d-1)$-flat defining the boundary of halfspace $h$.  Given a point $p \in \b{R}^d$, let $p_F = \arg \min_{q \in F} \|p-q\|$ describe the point on $F$ closest to $p$.  Now we define
\[
v_{h,w}(p) = 
\begin{cases}
1                                                                 & p \in h \textrm{ and } \|p-p_F\| \geq w
\\
\frac{1}{2} + \frac{1}{2} \frac{\|p-p_F\|}{w} & p \in h \textrm{ and } \|p-p_F\| < w
\\
\frac{1}{2} - \frac{1}{2} \frac{\|p-p_F\|}{w} & p \notin h \textrm{ and }  \|p - p_F\| < w
\\
0                                                                & p \notin h \textrm{ and } \|p - p_F\| \geq w.
\end{cases}
\]
These points within a slab of width $2w$ surrounding $F$ can take on a value between $0$ and $1$, where points outside of this slab revert back to the binary values of either $0$ or $1$.  

We can make this more general using a shift-invariant kernel $k(\|p-x\|) = K(p,x)$, where $k_w(\|p-x\|) = k(\|p-x\|/w)$ allows us to parameterize by $w$.   Define $v_{h,w}(p)$ as follows.  
\[
v_{h,w}(p) = 
\begin{cases}
1 - \frac{1}{2} k_w(\|p-p_F\|)                      & p \in h 
\\
\frac{1}{2} k_w(\|p-p_F\|)                          & p \notin h. 
\end{cases}
\]
For brevity, we will omit the $w$ and just use $v_h(p)$ when clear.  
These definitions are equivalent when using the triangle kernel.  But for instance we could also use a Epanechnikov kernel or Gaussian kernel.  Although the Gaussian kernel does not satisfy the restriction that only points in the width $2w$ slab take non $\{0,1\}$ values, we can use techniques from \cite{Phillips2013} to extend to this case as well.  This is illustrated in Figure \ref{fig:smhalf}.  
Another property held by this definition which we will exploit is that the slope $\varsigma$ of these kernels is bounded by $\varsigma = O(1/w) = c/w$, for some constant $c$; the constant $c = 1/2$ for triangle and Gaussian, and $c = 1$ for Epanechnikov.   

\begin{figure}[t!]
    \begin{minipage}[t]{0.3\linewidth}
   \includegraphics[width=3in]{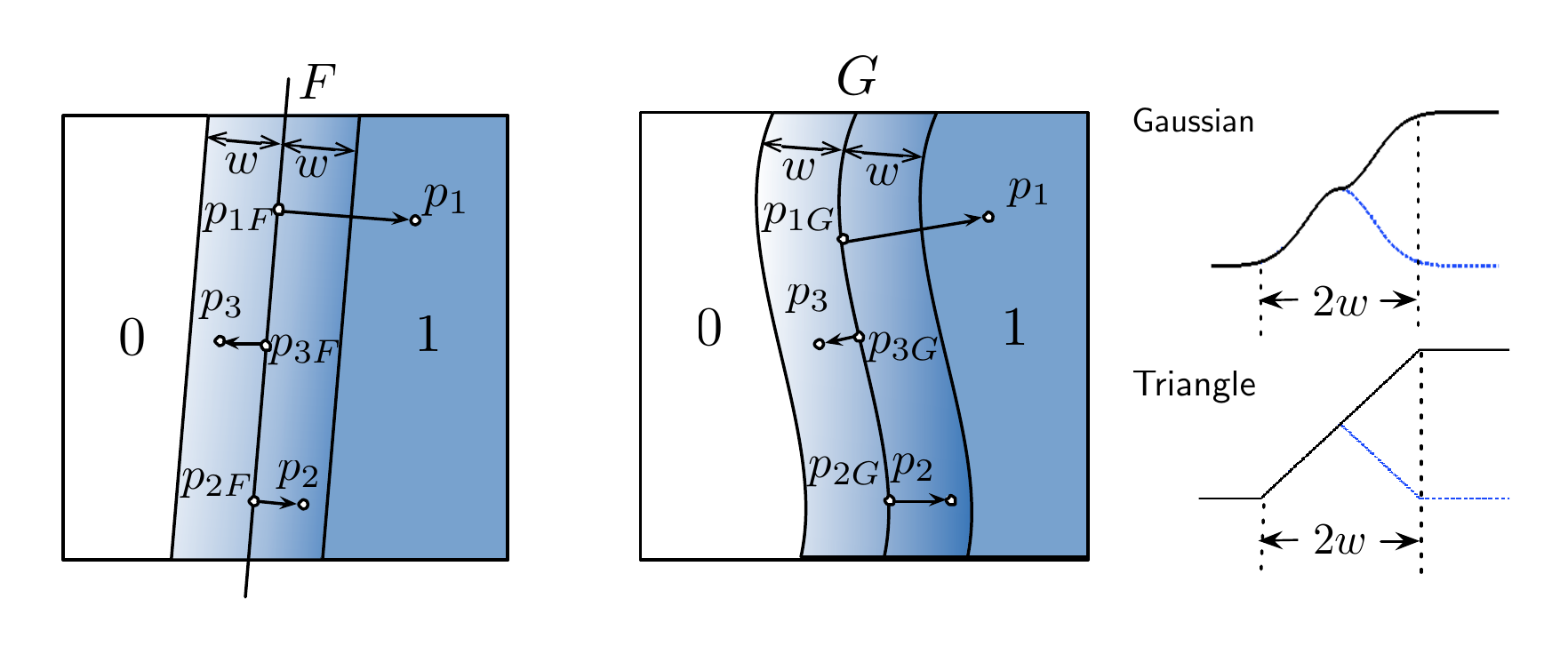}  
    \end{minipage}

    \hfill
    \begin{minipage}[t]{0.35\linewidth}  
    \vspace{-30mm}
	\begin{tabular}{l*{6}{c}r}
	                   & $\|p-p_F\|$ & $?(p \in h)$    & $v_h(p)$ \\\hline
	$p_1$ 	    & $3w/2$        & \textsc{true} & 1    \\
	$p_2$         & $3w/4$        & \textsc{true}  & 7/8  \\
	$p_3$         & $w/2$          & \textsc{false} & 1/4 \\
	\end{tabular}
	\end{minipage}

\vspace{-3.5mm}
\caption{Illustration of the smoothed halfspace, and smoothed polynomial surface, with function value of three points $\{p_1,p_2, p_3\}$ defined using a triangle kernel. }
  \label{fig:smhalf}
\end{figure}

Finally, we can further generalize this by replacing the flat $F$ at the boundary of $h$ with a polynomial surface $G$.  The point $p_G = \arg \min_{q \in G} \|p -q\|$ replaces $p_F$ in the above definitions.  Then the slab of width $2w$ is replaced with a curved volume in $\b{R}^d$; see Figure \ref{fig:smhalf}.  
For instance, if $G$ defines a circle in $\b{R}^d$, then $v_h$ defines a disc of value $1$, then an annulus of width $2w$ where the function value decreases to $0$.  
Alternatively, if $G$ is a single point, then we essentially recover the kernel range space, except that the maximum height is $1/2$ instead of $1$.  
We will prove the key structural results for polynomial curves in Section \ref{sec:match}, but otherwise focus on halfspaces to keep the discussion cleaner.  The most challenging elements of our results are all contained in the case with $F$ as a $(d-1)$-flat.

\subsection{$\eps$-Sample in a Smoothed Range Space}
\label{sec:e-sample-SRS}
It will be convenient to extend the notion of a kernel density estimate to these smoothed range space.  
A \emph{smoothed density estimate} $\sde_P$ is defined for any $h \in \c{H}_w$ as 
\[
\sde_P(h) = \frac{1}{|P|} \sum_{p \in P} v_h(p).
\]
An \emph{$\eps$-sample $Q$ of a smoothed range space} $(P, \c{H}_w)$ is a subset $Q \subset P$ such that
\[
\max_{h \in \c{H}_w} \left| \sde_P(h) - \sde_Q(h) \right| \leq \eps.
\]
Given such an $\eps$-sample $Q$, we can then consider a subset $\bar{\c{H}}_w$ of $\c{H}_w$ with bounded integral (perhaps restricted to some domain like a unit cube that contains all of the data $P$).  If we can learn the smooth range $\hat h = \arg \max_{h \in \bar{\c{H}}_w} \sde_Q(h)$, then we know
$\sde_P(h^*) - \sde_Q(\hat h) \leq \eps$, where $h^* = \arg \max_{h \in \bar{\c{H}}_w} \sde_P(h)$, since $\sde_Q(\hat h) \geq \sde_Q(h^*) \geq \sde_P(h^*) - \eps$.  Thus, such a set $Q$ allows us to learn these more general density estimates with generalization error $\eps$.  

We can also learn smoothed classifiers, like scenario (S2) in the introduction, with generalization error $\eps$, by giving points in the negative class a weight of $-1$; this requires separate $(\eps/2)$-samples for the negative and positive classes.  

\subsection{$(\eps, \tau)$-Net in a Smoothed Range Space}
\label{sec:et-net}
We now generalize the definition of an $\eps$-net.  Recall that it is a subset $Q \subset P$ such that $Q$ ``hits'' all large enough ranges ($|P \cap A|/|P| \geq \eps$).  However, the notion of ``hitting'' is now less well-defined since a point $q \in Q$ may be in a range but with value very close to $0$; if a smoothed range space is defined with a Gaussian or other kernel with infinite support, any point $q$ will have a non-zero value for \emph{all} ranges!  
Hence, we need to introduce another parameter $\tau \in (0, \eps)$, to make the notion of hitting more interesting  in this case.  

A subset $Q \subset P$ is an \emph{$(\eps,\tau)$-net of smoothed range space} $(P,\c{H}_w)$ if for any smoothed range $h \in \c{H}_w$ such that $\sde_P(h) \geq \eps$, then there exists a point $q \in Q$ such that $v_h(q) \geq \tau$.

The notion of $\eps$-net is closely related to that of hitting sets.  
A \emph{hitting set} of a binary range space $(P,\c{A})$ is a subset $Q \subset P$ so every $A \in \c{A}$ (not just the large enough ones) contains some $q \in Q$.  
To extend these notions to the smoothed setting, we again need an extra parameter $\tau \in (0,\eps)$, and also need to only consider large enough smoothed ranges, since there are now an infinite number even if $P$ is finite.  
A subset $Q \subset P$ is an \emph{$(\eps,\tau)$-hitting set of smoothed range space} $(P,\c{H}_w)$ if for any $h \in \c{H}_w$ such that $\sde_P(h) \geq \eps$, then $\sde_Q(h) \geq \tau$.  

In the binary range space setting, an $\eps$-net $Q$ of a range space $(P,\c{A})$ is sufficient to learn the best classifier on $P$ with generalization error $\eps$ in the non-agnostic learning setting, that is assuming a perfect classifier exists on $P$ from $\c{A}$.  In the density estimation setting, there is not a notion of a perfect classifier, but if we assume some other properties of the data, the $(\eps,\tau)$-net will be sufficient to recover them.  
For instance, consider (like scenario (S1) in the introduction) that $P$ is a discrete distribution so for some ``event'' points $p \in P$, there is at least an $\eps$-fraction of the probability distribution describing $P$ at $p$ (e.g., there are more than $\eps |P|$ points very close to $p$).  In this setting, we can recover the location of these points since they will have probability at least $\tau$ in the $(\eps,\tau)$-net $Q$.

\section{Linking and Properties of $(\eps, \tau)$-Nets}
\label{sec:net}

First we establish some basic connections between $\eps$-sample, $(\eps,\tau)$-net, and $(\eps,\tau)$-hitting set in smoothed range spaces.  In binary range spaces an $\eps$-sample $Q$ is also an $\eps$-net, and a hitting set is also an $\eps$-net; we show a similar result here up to the covering constant $\tau$.  

\begin{lemma}
\label{lem:hit-net}
For a smoothed range space $(P, \c{H}_w)$ and $0 < \tau < \eps < 1$, an $(\eps,\tau)$-hitting set $Q$ is also an $(\eps,\tau)$-net of $(P,\c{H}_w)$.  
\end{lemma}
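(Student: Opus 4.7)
The plan is a direct averaging argument, unwinding the two definitions and observing that a lower bound on the mean forces a lower bound on the maximum.

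First I would fix an arbitrary $h \in \c{H}_w$ with $\sde_P(h) \geq \eps$; this is precisely the ``large enough'' smoothed range for which both the hitting-set and net conditions need to say something. By the $(\eps,\tau)$-hitting set hypothesis applied to $h$, we obtain
\[
\sde_Q(h) \;=\; \frac{1}{|Q|} \sum_{q \in Q} v_h(q) \;\geq\; \tau.
\]

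Next I would invoke a one-line averaging (pigeonhole) step: since $\sde_Q(h)$ is the arithmetic mean of the values $\{v_h(q)\}_{q \in Q}$, if every $v_h(q)$ were strictly less than $\tau$ then the mean would also be strictly less than $\tau$, contradicting the previous inequality. Hence there must exist some $q^* \in Q$ with $v_h(q^*) \geq \tau$, which is exactly the $(\eps,\tau)$-net condition for $h$. Because $h$ was arbitrary among smoothed ranges with $\sde_P(h) \geq \eps$, this establishes that $Q$ is an $(\eps,\tau)$-net.

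There is no genuine obstacle here; the only thing worth flagging is that the constants $\eps$ and $\tau$ on the two sides of the implication are identical, so no loss or reparameterization is needed, and the hypothesis $0 < \tau < \eps < 1$ plays no role beyond making both notions well-defined. The lemma is really the smoothed analogue of the trivial binary-case observation that a hitting set meets every large enough range, and the proof is correspondingly short.
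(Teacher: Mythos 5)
Your proof is correct and is essentially identical to the paper's: both unfold the hitting-set hypothesis to get $\sde_Q(h) \geq \tau$ for any $h$ with $\sde_P(h) \geq \eps$, then use the averaging/pigeonhole observation that the mean of $\{v_h(q)\}_{q\in Q}$ being at least $\tau$ forces some $q^*$ with $v_h(q^*) \geq \tau$. No meaningful difference in approach.
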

\begin{proof}
The $(\eps,\tau)$-hitting set property establishes for all $h \in \c{H}_w$ with $\sde_P(h) \geq \eps$, then also $\sde_Q(h) \geq \tau$.  Since $\sde_Q(h) = \frac{1}{|Q|}\sum_{q \in Q} v_h(q)$ is the average over all points $q \in Q$, then it implies that at least one point also satisfies $v_h(q) \geq \tau$.  Thus $Q$ is also an $(\eps,\tau)$-net.  
 \end{proof}

In the other direction an $(\eps,\tau)$-net is not necessarily an $(\eps,\tau)$-hitting set since the $(\eps,\tau)$-net $Q$ may satisfy a smoothed range $h \in \c{H}_w$ with a single point $q \in Q$ such that $v_h(q) \geq \tau$, but all others $q' \in Q\setminus \{q\}$ having $v_h(q') \ll \tau$, and thus $\sde_Q(h) < \tau$.

\begin{theorem}
\label{thm:net} 
For $0 < \tau < \eps < 1$, an $(\eps - \tau)$-sample $Q$ in smoothed range space $(P,\c{H}_w)$ is an $(\eps, \tau)$-hitting set in $(P,\c{H}_w)$, and thus also an $(\eps,\tau)$-net of $(P,\c{H}_w)$.  
\end{theorem}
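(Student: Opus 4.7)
The plan is to unpack the definitions and apply a single triangle-inequality-style bound. By hypothesis, $Q$ is an $(\eps-\tau)$-sample, meaning that for every $h \in \c{H}_w$ we have $|\sde_P(h) - \sde_Q(h)| \leq \eps - \tau$. I would then fix an arbitrary smoothed range $h \in \c{H}_w$ with $\sde_P(h) \geq \eps$ and observe that the sample inequality, rearranged, gives $\sde_Q(h) \geq \sde_P(h) - (\eps - \tau) \geq \eps - (\eps - \tau) = \tau$. This directly verifies the $(\eps,\tau)$-hitting set property.

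For the second claim, I would simply invoke Lemma \ref{lem:hit-net}, which establishes that every $(\eps,\tau)$-hitting set is automatically an $(\eps,\tau)$-net. Composing the two steps yields the full statement.

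There is essentially no obstacle here: the argument is a one-line chain of inequalities from the $\eps$-sample definition, and unlike the analogous statement in the binary setting there is no subtlety about ``hitting a range with a single point'' versus ``having high average mass,'' because the $(\eps,\tau)$-hitting set is itself defined in terms of $\sde_Q(h)$, which is exactly what the $\eps$-sample controls. The only thing worth being careful about in writing it up is the requirement $\tau < \eps$, which is needed so that the tolerance $\eps - \tau$ used for the sample is strictly positive and the definition is meaningful.
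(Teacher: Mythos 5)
Your proof is correct and follows essentially the same route as the paper's: apply the $(\eps-\tau)$-sample inequality to obtain $\sde_Q(h) \geq \sde_P(h) - (\eps-\tau) \geq \tau$, then invoke Lemma~\ref{lem:hit-net}. The paper splits into two cases depending on whether $\sde_P(h) \geq \sde_Q(h)$ or not, but your single-line deduction from the absolute-value bound subsumes both, so the arguments are substantively identical.
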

\begin{proof}
Since $Q$ is the $(\eps - \tau)$-sample in the smoothed range space, for any smoothed range $h \in \c{H}_w$ we have $| \sde_P(h) - \sde_Q(h) | \leq \eps - \tau$.  
We consider the upper and lower bound separately.  

If $\sde_P(h) \ge \eps$, when $\sde_P(h) \ge \sde_Q(h)$, we have
\[
\sde_Q(h)  \geq \sde_P(h) - (\eps -\tau) \geq \eps - (\eps - \tau) = \tau.  
\]
And more simply when $\sde_Q(h) \geq \sde_P(h)$ and $\sde_P(h) \geq \eps \geq \tau$, then $\sde_Q(h) \geq \tau$.  
Thus in both situations, $Q$ is an $(\eps, \tau)$-hitting set of $(P,\c{H}_w)$.
And then by Lemma \ref{lem:hit-net} $Q$ is also an $(\eps,\tau)$-net of $(P,\c{H}_w)$.  
 \end{proof}

\subsection{Relations between Smoothed Range Spaces and Linked Binary Range Spaces}
Consider a smoothed range space $(P,\c{H}_w)$, and for one smoothed range $h \in \c{H}_w$, examine the range boundary $F$ (e.g. a $(d-1)$-flat, or polynomial surface) along with a symmetric, shift invariant kernel $K$ that describes $v_h$.  The \emph{superlevel} set $(v_h)^\tau$ is all points $x \in \b{R}^d$ such that $v_h(x) \geq \tau$.  
Then recall a smoothed range space $(P,\c{H}_w)$ is \emph{linked} to a binary range space $(P,\c{A})$ if every set $\{p \in P \mid v_h(p) \geq \tau\}$ for any $h \in \c{H}_w$ and any $\tau > 0$, is exactly the same as some range $A \cap P$ for $A \in \c{A}$.  
For smoothed range spaces defined by halfspaces, then the linked binary range space is also defined by halfspaces.  For smoothed range spaces defined by points, mapping to kernel range spaces, then the linked binary range spaces are defined by balls.  

Joshi \etal~\cite{JoshiKommarajuPhillips2011} established that given a kernel range space $(P,\c{K})$, a linked binary range space $(P,\c{A})$, and an $\eps$-sample $Q$ of $(P,\c{A})$, then $Q$ is also an $\eps$-kernel sample of $(P,\c{K})$.  
An inspection of the proof reveals the same property holds directly for smoothed range spaces, as the only structural property needed is that all points $p \in P$, as well as all points $q \in Q$, can be sorted in decreasing function value $K(p,x)$, where $x$ is the center of the kernel.  For smoothed range space, this can be replaced with sorting by $v_h(p)$.  

\begin{corollary}[\cite{JoshiKommarajuPhillips2011}] \label{cor:inher-sample}
Consider a smoothed range space $(P,\c{H}_w)$, a linked binary range space $(P,\c{A})$, and an $\eps$-sample $Q$ of $(P,\c{A})$ with $\eps \in (0,1)$.  Then $Q$ is an $\eps$-sample of $(P,\c{H}_w)$.  
\end{corollary}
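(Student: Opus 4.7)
The plan is to reduce the smoothed sample bound to the binary one via a layer-cake (superlevel-set) decomposition, indexed by the threshold $\tau$. Fix any $h \in \c{H}_w$. Since $v_h(p) \in [0,1]$ for every $p$, I would write
\[
v_h(p) = \int_0^1 \mathbf{1}[v_h(p) \geq \tau] \, d\tau,
\]
and then average over $P$, exchanging the finite sum with the integral, to obtain
\[
\sde_P(h) = \int_0^1 \frac{|\{p \in P : v_h(p) \geq \tau\}|}{|P|} \, d\tau,
\]
and analogously for $\sde_Q(h)$ with $Q$ in place of $P$.

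Next I would invoke the linking hypothesis: for each threshold $\tau > 0$, the superlevel set $\{p \in P : v_h(p) \geq \tau\}$ is of the form $P \cap A_\tau$ for some $A_\tau \in \c{A}$, and correspondingly $\{q \in Q : v_h(q) \geq \tau\}$ is $Q \cap A_\tau$ (the same $A_\tau$, because the linking is a statement about which subsets of $\b{R}^d$ arise, independent of the finite witness set). With both integrands rewritten this way, the $\eps$-sample property of $Q$ for $(P,\c{A})$ gives, for each $\tau$ individually,
\[
\left| \frac{|P \cap A_\tau|}{|P|} - \frac{|Q \cap A_\tau|}{|Q|} \right| \leq \eps.
\]

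Finally I would pull the absolute value inside the integral and integrate the pointwise bound over $\tau \in [0,1]$:
\[
\bigl|\sde_P(h) - \sde_Q(h)\bigr| \leq \int_0^1 \left| \frac{|P \cap A_\tau|}{|P|} - \frac{|Q \cap A_\tau|}{|Q|} \right| d\tau \leq \int_0^1 \eps \, d\tau = \eps.
\]
Since $h \in \c{H}_w$ was arbitrary, $Q$ is an $\eps$-sample of $(P,\c{H}_w)$.

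The routine step to be careful about is the very first one: making sure that the layer-cake identity is correctly set up for finite point sets (so that the sum/integral swap is just Fubini on a finite product with a bounded integrand, which is immediate). The genuinely load-bearing step is the linking invocation, where one must note that the \emph{same} $A_\tau$ serves both $P$ and $Q$; this is what lets the binary $\eps$-sample inequality be applied uniformly in $\tau$ and then integrated. I do not anticipate a real obstacle beyond this bookkeeping, which is exactly the observation the authors flag when they say the only structural property used is the ability to sort points by decreasing $v_h$ value and cut at any threshold.
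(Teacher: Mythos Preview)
Your layer-cake argument is correct and is essentially the same approach the paper (via the cited \cite{JoshiKommarajuPhillips2011}) has in mind: the ``sort by decreasing $v_h$'' description is just the discrete form of your integral decomposition, cutting at the finitely many thresholds where $|\{p \in P : v_h(p) \geq \tau\}|$ changes. Your observation that the same $A_\tau$ serves both $P$ and $Q$ is exactly the point, and it follows immediately from $Q \subset P$.
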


We now establish a similar relationship to $(\eps,\tau)$-nets of smoothed range spaces from $(\eps-\tau)$-nets of linked binary range spaces.  

\begin{theorem} \label{thm:inher-net}
Consider a smoothed range space $(P,\c{H}_w)$, a linked binary range space $(P,\c{A})$, and an $(\eps-\tau)$-net $Q$ of $(P,\c{A})$ for $0 < \tau < \eps < 1$.  Then $Q$ is an $(\eps,\tau)$-net of $(P,\c{H}_w)$.  
\end{theorem}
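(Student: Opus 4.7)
The plan is to argue by contrapositive (or equivalently by contradiction). Suppose $Q$ fails to be an $(\eps,\tau)$-net of $(P,\c{H}_w)$, i.e.\ there is some $h \in \c{H}_w$ with $\sde_P(h) \geq \eps$ but $v_h(q) < \tau$ for every $q \in Q$. Using the linking hypothesis, I would pick the range $A \in \c{A}$ with $A \cap P = \{p \in P : v_h(p) \geq \tau\}$, i.e.\ the one corresponding to the superlevel set $(v_h)^\tau$. Since $Q \subseteq P$, the statement ``every $q \in Q$ has $v_h(q) < \tau$'' translates directly into $Q \cap A = \emptyset$.

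Next I would apply the $(\eps - \tau)$-net property of $Q$ for the binary range space $(P,\c{A})$ in its contrapositive form: because $Q \cap A = \emptyset$, we must have $|P \cap A|/|P| < \eps - \tau$. The remaining task is to turn this density bound on $A$ into an upper bound on $\sde_P(h)$ that contradicts $\sde_P(h) \geq \eps$. I would split the sum defining $\sde_P(h)$ into $P \cap A$ and $P \setminus A$, using the trivial bound $v_h(p) \leq 1$ on $P \cap A$ and the defining bound $v_h(p) < \tau$ on $P \setminus A$.

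This yields
\[
\sde_P(h) \;<\; \frac{|P \cap A|}{|P|} \cdot 1 + \frac{|P \setminus A|}{|P|} \cdot \tau \;=\; \frac{|P \cap A|}{|P|}(1-\tau) + \tau,
\]
and plugging in $|P\cap A|/|P| < \eps - \tau$ gives $\sde_P(h) < (\eps-\tau)(1-\tau) + \tau = \eps - \tau(\eps - \tau) < \eps$, since $\eps - \tau > 0$. This contradicts $\sde_P(h) \geq \eps$, completing the argument.

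There is no real obstacle in this proof; the only subtlety is making sure the linking is applied at the correct threshold $\tau$ (so that ``missing $A$'' and ``every $q\in Q$ has $v_h(q)<\tau$'' are equivalent), and verifying that the arithmetic $(\eps-\tau)(1-\tau)+\tau < \eps$ indeed gives strict inequality for $0<\tau<\eps<1$, which it does because $\tau(\eps-\tau) > 0$. The choice of the net parameter $\eps - \tau$ on the binary side is exactly what is required to make this slack calculation close.
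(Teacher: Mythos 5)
Your proof is correct, but it takes a genuinely different route from the paper's. The paper argues directly: it orders $P$ by $v_h$-value, lets $y$ be the point with the $\lceil(\eps-\tau)n\rceil$-th largest value, applies linking at the threshold $v_h(y)$ (so the range $A$ captures the top $(\eps-\tau)n$ points), and then performs a counting/averaging argument using $n\cdot\sde_P(h) \geq n\eps$ to show $v_h(y)\geq\tau$; the $(\eps-\tau)$-net property then guarantees $Q$ hits one of those top points, each of which has value at least $v_h(y)\geq\tau$. You instead argue by contrapositive, applying linking directly at the threshold $\tau$ itself: if no $q\in Q$ has $v_h(q)\geq\tau$, then $Q$ misses the $\tau$-superlevel range $A$, so the net property forces $|P\cap A|/|P| < \eps-\tau$, and splitting $\sde_P(h)$ over $P\cap A$ (bounded by $1$) and $P\setminus A$ (bounded by $\tau$) gives $\sde_P(h) < (\eps-\tau)(1-\tau)+\tau = \eps - \tau(\eps-\tau) < \eps$, a contradiction. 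Both approaches rely on the linking hypothesis but at different thresholds; yours is somewhat cleaner in that it avoids the integer-counting bookkeeping around ``the $(\eps-\tau)n$-th largest point'' and works purely with density fractions. The one point worth being careful about is the strictness of the final inequality: the strict bound $\sde_P(h) < \eps$ ultimately comes from $|P\cap A|/|P| < \eps - \tau$ (the contrapositive of the net condition) together with $1-\tau>0$, so the argument is sound even in degenerate cases where, say, all points in $P\setminus A$ have $v_h=0$.
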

\begin{proof}
Let $|P| = n$.  
Then since $Q$ is an $(\eps-\tau)$-net of $(P, \c{A})$, for any range $A \in \c{A}$, if $|P \cap A| \geq (\eps - \tau)n$, then $Q \cap A \neq \emptyset$.

Suppose $h \in \c{H}_w$ has $\sde_P(h) \geq \eps$ and we want to establish that $\sde_Q(h) \geq \tau$.  
Let $A \in \c{A}$ be the range such that $(\eps-\tau)n$ points with largest $v_h(p_i)$ values are exactly the points in $A$.  We now partition $P$ into three parts
(1) let $P_1$  be the $(\eps - \tau)n - 1$ points with largest $v_h$ values, 
(2) let $y$ be the point in $P$ with $(\eps-\tau)n$th largest $v_h$ value, and
(3) let $P_2$ be the remaining $n - n(\eps-\tau)$ points.  
Thus for every $p_1 \in P_1$ and every $p_2 \in P_2$ we have $v_h(p_2) \leq v_h(y) \leq v_h(p_1) \leq 1$.  

Now using our assumption $n \cdot \sde_P(h) \geq n \eps$ we can decompose the sum
\[
n \cdot \sde_P(h) = \sum_{p_1 \in P_1} v_h(p_1) + v_h(y) + \sum_{p_2 \in P_2} v_h(p_2) \geq n\eps, 
\]
and hence using upper bounds $v_h(p_1) \leq 1$ and $v_h(p_2) \leq v_h(y)$, 
\begin{align*}
v_h(y) 
&\geq 
n\eps - \sum_{p_1 \in P_1} v_h(p_1) - \sum_{p_2 \in P_2} v_h(p_2)
\\ &\geq
n \eps - (n (\eps -\tau) -1) \cdot 1 - (n-n(\eps-\tau)) v_h(y).  
\end{align*}
Solving for $v_h(y)$ we obtain
\[
v_h(y) 
\geq 
\frac{n \tau +1}{n - n(\eps-\tau) + 1} 
\geq 
\frac{n \tau}{n - n(\eps - \tau)} 
\geq
\frac{n \tau}{n} 
= 
\tau.
\]

Since $(P,\c{A})$ is linked to $(P,\c{H}_w)$, there exists a range $A \in \c{A}$ that includes precisely $P_1 \cup y$ (or more points with the same $v_h(y)$ value as $y$).  Because $Q$ is an $(\eps-\tau)$-net of $(P, \c{A})$,  $Q$ contains at least one of these points, lets call it $q$.  Since all of these points have function value $v_h(p) \geq v_h(y) \geq \tau$, then $v_h(q) \geq \tau$.  Hence $Q$ is also an $(\eps,\tau)$-net of $(P,\c{H}_w)$, as desired.  
 \end{proof}

This implies that if $\tau \leq c \eps$ for any constant $c < 1$, then creating an $(\eps,\tau)$-net of a smoothed range space, with a known linked binary range space, reduces to computing an $\eps$-net for the linked binary range space.  For instance any linked binary range space with shatter-dimension $\nu$ has an $\eps$-net of size $O(\frac{\nu}{\eps} \log \frac{1}{\eps})$, including halfspaces in $\b{R}^d$ with $\nu = d$ and balls in $\b{R}^d$ with $\nu = d+1$; hence there exists $(\eps,\eps/2)$-nets of the same size.  For halfspaces in $\b{R}^2$ or $\b{R}^3$ (linked to smoothed halfspaces) and balls in $\b{R}^2$ (linked to kernels), the size can be reduced to $O(1/\eps)$~\cite{MSW90,HKSS14,pyrga2008new}.

\section{Min-Cost Matchings within Cubes}
\label{sec:match}
Before we proceed with our construction for smaller $\eps$-samples for smoothed range spaces, we need to prepare some structural results about min-cost matchings.  
Following some basic ideas from \cite{Phillips2013}, these matchings will be used for discrepancy bounds on smoothed range spaces in Section \ref{sec:smhalf}.

In particular, we analyze some properties of the interaction of a min-cost matching $M$ and some basic shapes (\cite{Phillips2013} considered only balls).  Let $P \subset \b{R}^d$ be a set of $2n$ points.  A \emph{matching} $M(P)$ is a decomposition of $P$ into $n$ pairs $\{p_i,q_i\}$ where $p_i,q_i \in P$ and each $p_i$ (and $q_i$) is in exactly one pair.  A \emph{min-cost matching} is the matching $M$ that minimizes $\s{cost}_1(M,P) = \sum_{i=1}^n \|p_i - q_i\|$.  
The min-cost matching can be computed in $O(n^3)$ time by \cite{Edm65} (using an extension of the Hungarian algorithm from the bipartite case).  In $\b{R}^2$ it can be calculated in $O(n^{3/2} \log^5 n)$ time~\cite{Var98}. 

Following \cite{Phillips2013}, again we will base our analysis on a result of \cite{BE93} which says that if $P \subset [0,1]^d$ (a unit cube) then for $d$ a constant,
$\s{cost}_d(M,P) = \sum_{i=1}^n \|p_i - q_i\|^d = O(1)$, 
where $M$ is the min-cost matching.  We make no attempt to optimize constants, and assume $d$ is constant.  

One simple consequence, is that if $P$ is contained in a $d$-dimensional cube of side length $\ell$, then 
$\s{cost}_d(M,P) = \sum_{i=1}^n \|p_i - q_i\|^d = O(\ell^d)$.

We are now interested in interactions with a matching $M$ for $P$ in a $d$-dimensional cube of side length $\ell$ $\s{C}_{\ell,d}$ (call this shape an \emph{$(\ell,d)$-cube}), and more general objects; in particular 
$C_w$ a $(w,d)$-cube and,
$S_w$ a slab of width $2w$, 
both restricted to be within $\s{C}_{\ell,d}$.  
Now for such an object $O_w$ (which will either be $C_w$ or $S_w$) and an edge $\{p,q\}$ where line segment $\overline{pq}$ intersects $O_w$ define point $p_B$ (resp. $q_B$) as the point on segment $\overline{pq}$ inside $O_w$ closest to $p$ (resp. $q$).  Note if $p$ (resp. $q$) is inside $O$ then $p_B = p$ (resp. $q_B= q$), otherwise it is on the boundary of $O_w$.  
For instance, see $C_{20w}$ in Figure \ref{fig:w2cost}.  

\begin{wrapfigure}{r}{1.3in}
\vspace{-0.2in}
\includegraphics[width=1.3in]{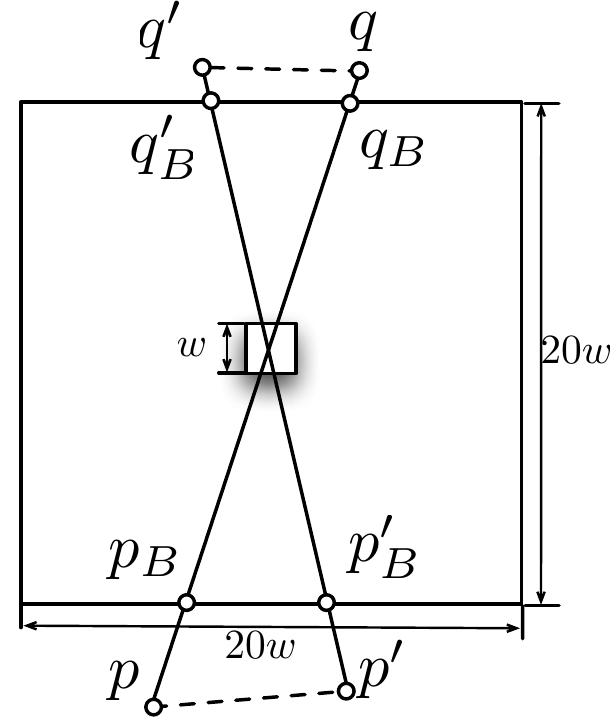}

\vspace{-.2in}

\caption{(T3) edges\label{fig:w2cost}}

\vspace{-.5in}
\end{wrapfigure}

Define the \emph{length} of a matching $M$ restricted to an object $O_w \subset \b{R}^d$ as 
\[
\rho(O_w,M) = \sum_{(q,p) \in M} \min \left\{ (2w)^d, \|p_B - q_B\|^d\right\}.
\]
Note this differs from a similar definition by \cite{Phillips2013} since that case did not need to consider when both $p$ and $q$ were both outside of $O_w$, and did not need the $\min \{ (2w)^d, \ldots \}$ term because all objects had diameter $2$.

\begin{lemma}
\label{lem:match-cube}
Let $P \subset \s{C}_{\ell,d}$, where $d$ is constant, and $M$ be its min-cost matching.
For any $(w,d)$-cube $C_w \subset \s{C}_{\ell,d}$ we have
$\rho(C_w, M) = O(w^d)$.
\end{lemma}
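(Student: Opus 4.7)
The plan is to extract from $M$ a sub-matching on the chord endpoints lying in $C_w$, show that this induced matching is itself min-cost, and then directly apply the Bern--Eppstein bound already cited in the paper.  For each pair $(p_i, q_i) \in M$ whose segment meets $C_w$, let $[p_{i,B}, q_{i,B}]$ denote the chord of $\overline{p_i q_i}$ inside $C_w$; set $V = \{p_{i,B}, q_{i,B}\}_i \subset C_w$, and let $N$ be the matching on $V$ that pairs each $p_{i,B}$ with its $q_{i,B}$.

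The key claim is that $N$ is a min-cost matching of $V$ under Euclidean distance.  I would prove this by contradiction: suppose $N'$ is a matching on $V$ with $\s{cost}_1(N') < \s{cost}_1(N)$.  Lift $N'$ to a matching $M'$ of the original crossing endpoints $\{p_i, q_i\}_i$ via the natural bijection $V \leftrightarrow \{p_i, q_i\}_i$ (each chord endpoint traces back to exactly one $p_i$ or $q_i$).  By the triangle inequality along each original segment, together with the collinear identity $\|p_i - p_{i,B}\| + \|q_i - q_{i,B}\| = \|p_i - q_i\| - \|p_{i,B} - q_{i,B}\|$ (which holds because $p_{i,B}, q_{i,B}$ lie on the straight segment $\overline{p_i q_i}$), one obtains
\[
\s{cost}_1(M') \;\leq\; \sum_i \|p_i - q_i\| - \s{cost}_1(N) + \s{cost}_1(N') \;<\; \sum_i \|p_i - q_i\|.
\]
Swapping $M'$ in for the crossing edges of $M$ would therefore produce a matching of $P$ strictly cheaper than $M$, contradicting the min-cost property.

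Once $N$ is known to be min-cost on $V \subset C_w$, the Bern--Eppstein bound applied to the $(w,d)$-cube gives $\s{cost}_d(N) = \sum_i \|p_{i,B} - q_{i,B}\|^d = O(w^d)$.  Every summand of $\rho(C_w, M)$ is at most $\|p_{i,B} - q_{i,B}\|^d$, so $\rho(C_w, M) = O(w^d)$, as claimed.  The main delicacy is the lifting step: one must check that the map $v \in V \mapsto v^{\mathrm{orig}}$ is well-defined and injective (immediate, since each chord endpoint arises from exactly one crossing edge on one specific side) so that $M'$ really is a matching of $\{p_i, q_i\}_i$.  The $\min\{(2w)^d, \cdot\}$ cap in the definition of $\rho$ plays no role in the cube case because chords of $C_w$ have length at most $w\sqrt d$; it will matter in the parallel slab analysis that follows.
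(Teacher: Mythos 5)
Your proof is correct, and it takes a genuinely different route from the paper's. The paper's proof is a case analysis: it introduces an auxiliary cube $C_{20w}$, splits the edges of $M$ that meet $C_w$ into three types (T1: both endpoints inside $C_{20w}$; T2: one endpoint in $C_w$, the other outside $C_{20w}$; T3: both endpoints outside $C_{20w}$), applies Bern--Eppstein to the T1 sub-matching after rescaling, and then shows via explicit counting arguments---including a $5w$-net on the faces of $C_{20w}$ and an edge-swap contradiction---that there are only $O(1)$ edges of types T2 and T3, each contributing at most $(2w)^d$. Your argument sidesteps the entire case analysis by proving a cleaner structural fact: the chord matching $N$ on $V = \{p_{i,B}, q_{i,B}\}_i$ inherits the min-cost property from $M$, via the lift-and-exchange contradiction (which uses the collinear decomposition $\|p_i - q_i\| = \|p_i - p_{i,B}\| + \|p_{i,B} - q_{i,B}\| + \|q_{i,B} - q_i\|$ on one side and the triangle inequality on the other). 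A single application of Bern--Eppstein to $N$ inside $C_w$ then finishes. This buys a shorter, self-contained proof that avoids the magic constant $20$ and the face-net construction, and it handles edges entering $C_w$ from arbitrarily far away uniformly rather than by separate bookkeeping; the paper's approach, on the other hand, is the template it then reuses and adapts when building Lemma~\ref{lem:match-slab}, but your version of Lemma~\ref{lem:match-cube} would plug into that decomposition just as well. One small inaccuracy in your closing remark: the cap $\min\{(2w)^d, \cdot\}$ can in fact bind for a cube when $d \geq 5$, since the diameter of $C_w$ is $w\sqrt{d}$ and $d^{d/2} > 2^d$ for $d \geq 5$; this is harmless because the cap only decreases each summand, so $\rho(C_w,M) \leq \s{cost}_d(N) = O(w^d)$ holds regardless.
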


\begin{proof}
We cannot simply apply the result of \cite{BE93} since we do not restrict that $P \subset C_w$.  We need to consider cases where either $p$ or $q$ or both are outside of $C_w$.  As such, we have three types of edges we consider, based on a cube $C_{20w}$ of side length $20w$ and with center the same as $C_w$.  
\begin{enumerate}
\item[(T1)] Both endpoints are within $C_{20w}$ of edge length at most $\sqrt{d}20w$.
\item[(T2)] One endpoint is in $C_w$, the other is outside $C_{20w}$.
\item[(T3)] Both endpoints are outside $C_{20w}$.
\end{enumerate}

For all (T1) edges, the result of Bern and Eppstein can directly bound their contribution to $\rho(C_w,M)$ as $O(w^d)$ (scale to a unit cube, and rescale).  
For all (T2) edges, we can also bound their contribution to $\rho(C_w,M)$ as $O(w^d)$, by extending an analysis of \cite{Phillips2013} when both $C_w$ and $C_{20w}$ are similarly proportioned balls.  This analysis shows there are $O(1)$ such edges.  

We now consider the case of (T3) edges, restricting to those that also intersect $C_w$.  We argue there can be at most $O(1)$ of them.  In particular consider two such edges $\{p,q\}$ and $\{p',q'\}$, and their mappings to the boundary of $C_{20w}$ as $p_B, q_B, p'_B, q'_B$; see Figure \ref{fig:w2cost}.  
If $\|p_B - p'_B\| \leq 10 w$ and $\|q_B - q'_B\| \leq 10w$, then we argue next that this cannot be part of a min-cost matching since $\|p-p'\| + \|q-q'\| < \|p-q\| + \|p'-q'\|$, and it would be better to swap the pairing.  Then it follows from the straight-forward net argument below that there can be at most $O(1)$ such pairs.  

We first observe that 
$\|p_B-p'_B\| + \|q_B-q'_B\| \leq 10w + 10w < 20w + 20w \leq \|p_B-q_B\| + \|p'_B-q'_B\|$.  
Now we can obtain our desired inequality using that 
$\|p-q\| = \|p - p_B\| + \|p_B - q_B\| + \|q_B - q\|$ (and similar for $\|p'-q'\|$) 
and that 
$\|p-p'\| \leq \|p-p_B\| + \|p_B - p'_B\| + \|p'_B - p'\|$ by triangle inequality (and similar for $\|q-q'\|$).  

Next we describe the net argument that there can be at most $O(d^2\cdot 2^{2d}) = O(1)$ such pairs with $\|p_B - p'_B\| > 10w$ and $\|q_B - q'_B\| >10w$.  
First place a $5w$-net $\c{N}_f$ on each $(d-1)$-dimensional face $f$ of $C_{20w}$ so that any point $x \in f$ is within $5w$ of some point $\eta \in \c{N}_f$.  We can construct $\c{N}_f$ of size $O(2^d)$ with a simple grid.  Then let $\c{N} = \bigcup_f \c{N}_f$ as the union of the nets on each face; its size is $O(d \cdot 2^d)$.  
Now for any point $p \notin C_{20w}$ let $\eta(p) = \arg\min_{\eta \in \c{N}} \|p_B - \eta\|$ be the closest point in $\c{N}$ to $p_B$.  If two points $p$ and $p'$ have $\eta(p) = \eta(p')$ then $\|p-p'\| \leq 10w$.  
Hence there can be at most $O((d \cdot 2^d)^2)$ edges with $\{p,q\}$ mapping to unique $\eta(p)$ and $\eta(q)$ if no other edge $\{p',q'\}$ has $\|p_B-p'_B\| \leq 10w$ and $\|q_B - q_B'\| \leq 10w$.  

Concluding, there can be at most $O(d^2 \cdot 2^{2d}) = O(1)$ edges in $M$ of type (T3), and the sum of their contribution to $\rho(C_w,M)$ is at most $O(w^d)$, completing the proof.  
 \end{proof}

\begin{lemma}
\label{lem:match-slab}
Let $P \subset \s{C}_{\ell,d}$, where $d$ is constant, and let $M$ be its min-cost matching.  
For any width $2w$ slab $S_w$ restricted to $\s{C}_{\ell,d}$ we have
$\rho(S_w, M) = O(\ell^{d-1} w)$.
\end{lemma}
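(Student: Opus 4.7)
The plan is to cover the slab $S_w$ by a grid of small cubes and apply Lemma~\ref{lem:match-cube} to each one. Since $S_w \cap \s{C}_{\ell,d}$ is contained in a box of dimensions $\ell \times \cdots \times \ell \times 2w$, we cover it by $O((\ell/w)^{d-1})$ axis-aligned cubes of side length $cw$, where the constant $c = 1/\sqrt{d}$ is chosen so that any line segment's intersection with a single cube has length at most $w < 2w$. Applying Lemma~\ref{lem:match-cube} to each such cube $C$ yields $\rho(C, M) = O(w^d)$, and summing over the cover gives $\sum_C \rho(C, M) = O((\ell/w)^{d-1}) \cdot O(w^d) = O(\ell^{d-1} w)$, which is the target bound.

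What remains is to show $\rho(S_w, M) = O\!\left(\sum_C \rho(C, M)\right)$. For each edge $\{p,q\} \in M$ whose segment crosses $S_w$, let $L_S = \|p_B - q_B\|$ be its length inside the slab, and let $L_1, \ldots, L_N$ be the lengths of its intersections with the $N$ cubes of the cover that it visits, so that $\sum_i L_i = L_S$. By the choice of side length, every $L_i \leq w$, so the cap $(2w)^d$ never activates, and this edge's contribution to $\sum_C \rho(C, M)$ is exactly $\sum_i L_i^d$, while its contribution to $\rho(S_w, M)$ is $\min\{(2w)^d, L_S^d\}$. The analysis then splits on whether $L_S \leq 2w$ or $L_S > 2w$.

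If $L_S \leq 2w$, the segment inside the slab fits in a ball of diameter $2w$, which meets only $N = O(1)$ cubes of the cover; the power-mean inequality then gives $\sum_i L_i^d \geq L_S^d / N^{d-1} = \Omega(L_S^d)$, matching the edge's contribution $L_S^d$ to $\rho(S_w, M)$ up to a constant. If $L_S > 2w$, the edge contributes $(2w)^d$ to $\rho(S_w, M)$; a standard grid-traversal count shows $N = O(L_S/w)$, so pigeonhole forces some $L_i \geq L_S / N = \Omega(w)$, and hence $\sum_i L_i^d \geq L_i^d = \Omega(w^d) = \Omega((2w)^d)$, again matching up to a constant. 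Summing over all edges yields $\rho(S_w, M) = O\!\left(\sum_C \rho(C, M)\right) = O(\ell^{d-1} w)$.

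The main obstacle is the long-segment case: a segment with $L_S \gg w$ may visit many cubes of the cover, and one has to argue that it nevertheless concentrates a constant fraction of its length inside at least one single cube. This rests on the geometric fact that a line segment of length $L$ crosses only $O(L/w)$ axis-aligned cubes of side $\Theta(w)$, which serves as the essential input beyond Lemma~\ref{lem:match-cube}.
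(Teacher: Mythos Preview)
Your argument is correct and follows essentially the same route as the paper: cover $S_w$ by $O((\ell/w)^{d-1})$ cubes of side $\Theta(w)$, apply Lemma~\ref{lem:match-cube} to each, and then justify summing by observing that any edge whose slab-portion exceeds $\Theta(w)$ must leave $\Omega(w)$ of its length in a single cube (pigeonhole via the $O(L_S/w)$ grid-traversal count), so its capped contribution $(2w)^d$ to $\rho(S_w,M)$ is dominated by its contribution to $\sum_C \rho(C,M)$. The only cosmetic differences are that the paper uses side-$w$ cubes and phrases the long-edge step via the $3^d$ neighbor count rather than an explicit $N=O(L_S/w)$ bound, and it does not separate out your short-edge power-mean case; one small slip in your write-up is that the cap in $\rho(C,M)$ for a cube of side $cw$ is $(2cw)^d$, not $(2w)^d$, but this does not affect the argument.
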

\begin{proof}
We can cover the slab $S_w$ with $O((\ell/w)^{d-1})$ $(w,d)$-cubes.  To make this concrete, we cover $\textsf{C}_{\ell,d}$ with $\lceil \ell/w \rceil^d$ cubes on a regular grid.  Then in at least one basis direction (the one closest to orthogonal to the normal of $F$) any column of cubes can intersect $S_w$ in at most $4$ cubes.  Since there are $\lceil \ell/w \rceil^{d-1}$ such columns, the bound holds.  Let $\c{C}_w$ be the set of these cubes covering $S_w$.  

Restricted to any one such cube $C_w$, the contribution of those edges to $\rho(S_w,M)$ is at most $O(w^d)$ by Lemma \ref{lem:match-cube}.  Now we need to argue that we can just sum the effect of all covering cubes.  The concern is that an edge goes through many cubes, only contributing a small amount to each $\rho(C_w,M)$ term, but when the total length is taken to the $d$th power it is much more.  However, since each edge's contribution is capped at $(2w)^2$, we can say that if any edge goes through more than $O(1)$ cubes, its length must be at least $w$, and its contribution in one such cube is already $\Omega(w)$, so we can simply inflate the effect of each cube towards $\rho(S_w,M)$ by a constant.  

In particular, consider any edge $\overline{pq}$ that has $p \in C_w$.  
Each cube has $3^d-1$ neighboring cubes, including through vertex incidence.  Thus if edge $\overline{pq}$ passes through more than $3^d$ cubes, $q$ must be in a cube that is not one of $C_w$'s neighbors.  
Thus it must have length at least $w$; and hence its length in at least one cube $C_w$ must be at least $w/3^d$, with its contribution to $\rho(C_w,M) > w^d/(3^{d^2})$.  Thus we can multiply the effect of each edge in $\rho(C_w,M)$ by $3^{d^2} 2^d = O(1)$ and be sure it is at least as large as the effect of that edge in $\rho(S_w,M)$.  Hence 
\begin{align*}
\rho(S_w,M) &
\leq 3^{d^2} 2^d \sum_{C_w \in \c{C}_w} \rho(C_w,M) \leq O(1) \sum_{C_w \in \c{C}_w} O(w^d) 
\\ &
= O((\ell/w)^{d-1}) \cdot O(w^d) = O(\ell^{d-1} w).  
\end{align*}
\end{proof}

We can apply the same decomposition as used to prove Lemma \ref{lem:match-slab} to also prove a result for a $w$-expanded volume $G_w$ around a degree $g$ polynomial surface $G$.  A degree $g$ polynomial surface can intersect a line at most $g$ times, so for some $\textsf{C}_{\ell,d}$ the expanded surface $G_w \cap \s{C}_{\ell,d}$ can be intersected by $O(g (\ell/w)^{d-1})$ $(w,d)$-cubes.  Hence we can achieve the following bound.  

\begin{corollary}
\label{lem:match-curve}
Let $P \subset \textsf{C}_{\ell,d}$, where $d$ is constant, and let $M$ be its min-cost matching.  
For any volume $G_w$ defined by a polynomial surface of degree $g$ expanded by a width $w$, restricted to $\textsf{C}_{\ell,d}$ we have
$\rho(G_w, M) = O(g \ell^{d-1} w)$.
\end{corollary}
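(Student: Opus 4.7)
The plan is to mirror the proof of Lemma~\ref{lem:match-slab} almost exactly, with the slab replaced by $G_w$ and the covering-cube count adjusted using the degree bound on $G$. Specifically, I would establish a covering of $G_w \cap \textsf{C}_{\ell,d}$ by $O(g(\ell/w)^{d-1})$ $(w,d)$-cubes, apply Lemma~\ref{lem:match-cube} inside each, and then deal with edges that straddle multiple cubes by the same inflation trick used in Lemma~\ref{lem:match-slab}.

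First I would lay down a regular grid of $(w,d)$-cubes covering $\textsf{C}_{\ell,d}$, and count how many of these intersect $G_w$. Choose any coordinate axis and partition the $\lceil \ell/w \rceil^d$ grid cubes into $\lceil \ell/w\rceil^{d-1}$ axis-parallel \emph{columns}. Fix such a column and let $L$ denote its central line. Since $G$ is a polynomial surface of degree $g$, the line $L$ meets $G$ in at most $g$ points, so the set $L \cap G_w$ is a union of at most $g$ intervals, each of length at most $2w$ measured along $L$. Consequently at most $O(g)$ cubes of the column intersect $G_w$, and summing over all columns yields $|\mathcal{C}_w| = O(g(\ell/w)^{d-1})$ covering cubes. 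This is the one step where the polynomial degree enters; the rest of the argument is purely geometric.

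Given the cover $\mathcal{C}_w$, Lemma~\ref{lem:match-cube} gives $\rho(C_w, M) = O(w^d)$ for each $C_w \in \mathcal{C}_w$. The remaining issue is that a single matching edge may contribute to $\rho(G_w, M)$ through several covering cubes, so naively summing $\sum_{C_w} \rho(C_w, M)$ might undercount when the $d$th power concentrates the edge's contribution. I handle this exactly as in Lemma~\ref{lem:match-slab}: any edge crossing more than $3^d$ cubes has length at least $w$, hence its restriction to some single cube already has length $\Omega(w)$ and contributes $\Omega(w^d)$ to that cube's $\rho$, so multiplying the per-cube bound by the constant $3^{d^2}2^d$ dominates the edge's true contribution to $\rho(G_w, M)$. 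Assembling the pieces,
\[
\rho(G_w, M) \;\leq\; 3^{d^2}2^d \sum_{C_w \in \mathcal{C}_w} \rho(C_w, M) \;=\; O(1) \cdot O(g(\ell/w)^{d-1}) \cdot O(w^d) \;=\; O(g \ell^{d-1} w).
\]

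The one genuinely new step is the covering estimate, and the main care required is pinning down the per-column count: I must use that $G$ is algebraic of degree $g$ (so a line hits it $\leq g$ times) rather than any normal-direction argument, since $G_w$ need not be a slab and has no single normal. Everything else is a direct transplant of the slab proof.
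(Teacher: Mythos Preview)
Your overall strategy is exactly the paper's: the paper's entire justification for this corollary is the short paragraph immediately preceding it, which says to reuse the decomposition from Lemma~\ref{lem:match-slab} and to replace the slab-column count with the observation that a degree-$g$ surface meets a line in at most $g$ points, giving $O(g(\ell/w)^{d-1})$ covering cubes. Structurally you match the paper step for step, and your handling of edges spanning many cubes is a verbatim transplant of the slab argument, which is precisely what the paper intends.

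There is, however, a concrete gap in your cube-counting step. From ``$L$ meets $G$ in at most $g$ points'' you infer that $L\cap G_w$ is a union of at most $g$ intervals, each of length at most $2w$. Neither part follows. If $L$ misses $G$ entirely but comes within distance $w$ of it, then $L\cap G_w$ is nonempty with zero intersection points; and even at a genuine intersection, the interval length is controlled by the incidence angle, not by $w$---take $G$ a hyperplane (degree $1$) and $L$ nearly parallel to it, so they meet once but $L\cap G_w$ is arbitrarily long. The paper does not actually supply this argument either; it simply asserts that the $g$-intersection property yields the $O(g(\ell/w)^{d-1})$ cube count. So you have not departed from the paper's route, but you have filled in its missing step with an incorrect justification. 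A correct version needs a different count: for instance, bound the number of grid cubes that $G$ itself intersects (a standard consequence of the $g$-line-intersection bound, using that each of the $O((\ell/w)^{d-1})$ axis-parallel grid lines meets $G$ at most $g$ times and each intersection certifies $O(1)$ incident cubes), and then observe that any cube meeting $G_w$ is within $O(1)$ grid steps of a cube meeting $G$.
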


\section{Constructing $\eps$-Samples for Smoothed Range Spaces}
\label{sec:smhalf}

In this section we build on the ideas from \cite{Phillips2013} and the new min-cost matching results in Section \ref{sec:match} to produce new discrepancy-based $\eps$-sample bounds for smoothed range spaces.  
The basic construction is as follows.  We create a min-cost matching $M$ on $P$, then for each pair $(p,q) \in M$, we retain one of the two points at random, halving the point set.  We repeat this until we reach our desired size.  This should not be unfamiliar to readers familiar with discrepancy-based techniques for creating $\eps$-samples of binary range spaces~\cite{Mat99,Cha01}.  In that literature similar methods exist for creating matchings ``with low-crossing number''.  Each such matching formulation is specific to the particular combinatorial range space one is concerned with.  However, in the case of smoothed range spaces, we show that the min-cost matching approach is a universal algorithm.  It means that an $\eps$-sample $Q$ for one smoothed range space $(P,\c{H}_w)$ is also an $\eps$-sample for any other smoothed range space $(P,\c{H}'_w)$, perhaps up to some constant factors.  
We also show how these bounds can sometimes improve upon $\eps$-sample bounds derived from linked range spaces; herein the parameter $w$ will play a critical role.

\subsection{Discrepancy for Smoothed Halfspaces}
\label{sec:m2disc}
To simplify arguments, we first consider $P \subset \b{R}^2$ extending to $\b{R}^d$ in Section \ref{sec:extd}. 

Let $\chi: P \to \{-1,+1\}$ be a coloring of $P$, and define the \emph{discrepancy} of $(P,\c{H}_w)$ with coloring $\chi$ as $\disc_\chi(P,\c{H}_w) = \max_{h \in \c{H}_w} |\sum_{p \in P} \chi(p) v_h(p)|$. Restricted to one smoothed range $h \in \c{H}_w$ this is $\disc_\chi(P,h) = |\sum_{p \in P} \chi(p) v_h(p)|$.  
We construct a coloring $\chi$ using the min-cost matching $M$ of $P$; for each $\{p_i,q_i\} \in M$ we randomly select one of $p_i$ or $q_i$ to have $\chi(p_i) = +1$, and the other $\chi(q_i) = -1$.  
We next establish bounds on the discrepancy of this coloring for a $\varsigma$-bounded smoothed range space $(P,\c{H}_w)$, i.e., where the gradient of $v_h$ is bounded by $\varsigma \leq c_1/w$ for a constant $c_1$ (see Section \ref{sec:smooth-RS}).  

For any smoothed range $h \in \c{H}_w$, we can now define a random variable $X_j = \chi(p_j) v_h(p_j) + \chi(q_j) v_h(q_j)$ for each pair $\{p_j,q_j\}$ in the matching $M$.  This allows us to rewrite $\disc_\chi(P,h) = |\sum_j X_j|$.  We can also define a variable $\Delta_j = 2|v_h(p_j) - v_h(q_j)|$ such that $X_j \in \{-\Delta_j/2, \Delta_j/2\}$.  Now following the key insight from \cite{Phillips2013} we can bound $\sum_j \Delta_j^2$ using results from Section \ref{sec:match}, which shows up in the following Chernoff bound from \cite{DP09}:
Let $\{X_1, X_2, \ldots\}$ be independent random variables with $\textbf{\sffamily E}[X_j]=0$ and $X_j = \{-\Delta_j/2, \Delta_j/2\}$ then 
\begin{equation}\label{eq:CH}
\Pr\Big[\disc_\chi(P,h) \geq \alpha\Big] = \Pr\Big[\Big|\sum_j X_j \Big| \geq \alpha \Big] \leq 2 \exp\left(\frac{-2\alpha^2}{\sum_j \Delta_j^2}\right).
\end{equation}


\begin{lemma}
Assume $P \subset \b{R}^2$ is contained in some cube $\textsf{C}_{\ell,2}$ and with min-cost matching $M$ defining $\chi$, and consider a $\varsigma$-bounded smoothed halfspace $h \in \c{H}_w$ associated with slab $S_w$.  
Let $\rho(S_w,M) \leq c_2(\ell w)$ for constant $c_2$ (see definition of $\rho$ in Section \ref{sec:match}).  
Then $\Pr \bigg[\disc_\chi(P,h) > C\sqrt{\frac{\ell}{w} \log(2/\delta)} \bigg] \leq \delta$ for any $\delta > 0$ and constant $C = c_1\sqrt{2c_2}$.
\label{lem:discK2d}
\end{lemma}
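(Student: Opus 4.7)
The plan is to apply the Chernoff-type tail bound in equation \eqref{eq:CH} directly, with the work concentrated in bounding the sum $\sum_j \Delta_j^2$ by a quantity of order $\ell/w$. Once we show $\sum_j \Delta_j^2 \leq (4c_1^2 c_2 \ell)/w$, substituting into \eqref{eq:CH} and setting the right-hand side equal to $\delta$ solves for $\alpha = c_1\sqrt{2c_2(\ell/w)\log(2/\delta)}$, which is exactly the claimed bound.

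The first step is an edge-by-edge bound on $\Delta_j = 2|v_h(p_j) - v_h(q_j)|$. I would split on whether the segment $\overline{p_j q_j}$ meets the slab $S_w$. If it does not, then both $p_j$ and $q_j$ lie in a region where $v_h$ is constant (either both equal to $0$ or both equal to $1$, since the flat $F$ separates these regions only through the slab), and so $\Delta_j = 0$ and this edge contributes nothing. If it does meet $S_w$, I use that $v_h$ is constant outside the slab together with the $\varsigma$-Lipschitz property inside the slab to conclude
\[
|v_h(p_j) - v_h(q_j)| \;\leq\; \varsigma \,\|p_{j,B} - q_{j,B}\| \;\leq\; (c_1/w)\,\|p_{j,B} - q_{j,B}\|,
\]
where $p_{j,B}, q_{j,B}$ are the boundary points from Section~\ref{sec:match}. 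Combined with the trivial bound $|v_h(p_j)-v_h(q_j)| \le 1$, this yields
\[
\Delta_j^2 \;\leq\; \frac{4c_1^2}{w^2}\,\min\!\Big\{(2w)^2,\ \|p_{j,B}-q_{j,B}\|^2\Big\}.
\]

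The second step is to sum across the matching and invoke the hypothesis $\rho(S_w,M) \leq c_2 \ell w$ (which follows from Lemma~\ref{lem:match-slab} in $d=2$). This gives
\[
\sum_j \Delta_j^2 \;\leq\; \frac{4c_1^2}{w^2} \,\rho(S_w,M) \;\leq\; \frac{4c_1^2 c_2 \ell}{w},
\]
after which plugging into \eqref{eq:CH} produces $\Pr[\disc_\chi(P,h) \geq \alpha] \leq 2\exp(-\alpha^2 w/(2 c_1^2 c_2 \ell))$, and solving for $\alpha$ such that this equals $\delta$ yields the stated $\alpha = C\sqrt{(\ell/w)\log(2/\delta)}$ with $C = c_1\sqrt{2c_2}$.

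The main obstacle is the combined ``$\min$'' bound in the second displayed equation above: one has to check that in the regime where $\|p_{j,B}-q_{j,B}\|$ is large (so that the Lipschitz bound $\varsigma \|p_{j,B}-q_{j,B}\|$ would exceed $1$), the cruder bound $\Delta_j \leq 2$ still matches the cap $(2w)^2$ in the definition of $\rho$ up to the stated constants; this is exactly where the specific form of $\rho$ built into Section~\ref{sec:match} pays off, and why the cap $(2w)^d$ is included there rather than an unbounded $\|p_B-q_B\|^d$. Everything else is bookkeeping: the $X_j$ have mean zero by construction of $\chi$ on matched pairs, so the Chernoff bound applies verbatim.
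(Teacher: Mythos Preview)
Your proposal is correct and follows essentially the same route as the paper: bound each $\Delta_j^2$ by $(4c_1^2/w^2)\min\{(2w)^2,\|p_{j,B}-q_{j,B}\|^2\}$, sum to get $\sum_j \Delta_j^2 \le (4c_1^2/w^2)\rho(S_w,M) \le 4c_1^2 c_2 \ell/w$, then plug into \eqref{eq:CH} and solve for $\alpha$. If anything, your write-up is tidier than the paper's on two points: you correctly use $\min$ (the paper's proof twice writes $\max$ where $\min$ is meant) and you work with the boundary points $p_{j,B},q_{j,B}$ that actually appear in the definition of $\rho$, together with an explicit case split on whether the edge meets $S_w$; the paper collapses this into a single line. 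Your closing remark about needing the cap $(2w)^2$ to absorb the trivial bound $\Delta_j\le 2$ implicitly uses $c_1\ge 1/2$, which holds for all the kernels the paper considers.
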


\begin{proof}
Using the gradient of $v_h$ is at most $\varsigma = c_1/w$ and $|v_h(p_j) - v_h(q_j)| \leq \varsigma \max\{2w, \|p_j-q_j\|\}$ we have
\[
\sum_j \Delta_j^2 
= 
\sum_j 4(v_h(p_j) - v_h(q_j))^2 
\leq 
4 \varsigma^2 \rho(S_w,M) 
\leq 
4c_1^2/w^2 \cdot c_2\ell w
=4c_1^2 c_2 \ell/w,
\]
where the second inequality follows by Lemma \ref{lem:match-slab} which shows that $\rho(S_w, M) = \sum_j \max\{(2w)^2, \|p_j - q_j\|^2\} \leq c_2 (\ell w)$.   

We now study the random variable $\disc_\chi(P,h) = |\sum_i X_i|$ for a single $h \in \c{H}_w$.  
Invoking (\ref{eq:CH}) we can bound
$\Pr[\disc_\chi(P,h) > \alpha] \leq 2 \exp(-\alpha^2/(2 c_1^2c_2\ell /w))$.
Setting $C = c_1\sqrt{2c_2}$ and $\alpha = C\sqrt{\frac{\ell}{w} \log(2/\delta)}$  reveals $\Pr \bigg[\disc_\chi(P,h) > C \sqrt{\frac{\ell}{w} \log(2/\delta)} \bigg] \leq \delta$.  
 \end{proof}

\subsection{From a Single Smoothed Halfspace to a Smoothed Range Space}
The above theorems imply small discrepancy for a single smoothed halfspace $h \in \c{H}_w$, but this does not yet imply small discrepancy $\disc_\chi(P,\c{H}_w)$, for all choices of smoothed halfspaces simultaneously.  And in a smoothed range space, the family $\c{H}_w$ is not finite, since even if the same set of points have $v_h(p)=1$, $v_h(p) = 0$, or are in the slab $S_w$, infinitesimal changes of $h$ will change $\sde_P(h)$.  
So in order to bound $\disc_\chi(P,\c{H}_w)$, we will show that there are polynomial in $n$ number of smoothed halfspaces that need to be considered, and then apply a union bound across this set.  

\begin{theorem}
\label{thm:net2}
For $P \subset \b{R}^2$ of size $n$, for $\c{H}_w $, and value $\Psi(n,\delta)=O\bigg(\sqrt{\frac{\ell}{w}\log \frac{n}{\delta}}\bigg)$ for $\delta > 0$, we can choose a coloring $\chi$ such that $\Pr[\disc_\chi(P,\c{H}_w) > \Psi(n,\delta)] \leq \delta$.
\end{theorem}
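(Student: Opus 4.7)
The plan is to reduce the infinite family $\c{H}_w$ to a polynomial-sized set $\c{H}^\star_w$ of representative smoothed halfspaces, apply Lemma~\ref{lem:discK2d} to each, and union bound. Parameterize any $h \in \c{H}_w$ by the direction $\theta \in [0,\pi)$ of the normal to its bounding flat $F$ and the signed offset $b$ from the center of $\textsf{C}_{\ell,2}$. Assuming $P \subset \textsf{C}_{\ell,2}$ (which we may by shifting and scaling), only $(\theta, b) \in [0,\pi) \times [-O(\ell), O(\ell)]$ are relevant.

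Next I would exploit the $\varsigma$-boundedness of the smoothing kernel ($\varsigma \le c_1/w$): a perturbation of $F$ by $(\Delta b, \Delta \theta)$ changes each $v_h(p)$ by at most $\varsigma(|\Delta b| + \ell |\Delta \theta|)$, since a rotation around the center of $\textsf{C}_{\ell,2}$ moves $p$'s closest point on $F$ by at most $\ell|\Delta\theta|$. Placing a grid of resolution $\eta_b = \Theta(w/n^2)$ in $b$ and $\eta_\theta = \Theta(w/(n^2 \ell))$ in $\theta$ gives a set $\c{H}^\star_w$ of $N = \mathrm{poly}(n, \ell/w)$ representatives, and for any $h \in \c{H}_w$ some $h^\star \in \c{H}^\star_w$ satisfies
\[
|\disc_\chi(P,h) - \disc_\chi(P,h^\star)| \le n \varsigma(\eta_b + \ell \eta_\theta) = O(1/n),
\]
which is dominated by the target $\Psi(n,\delta)$ and so can be absorbed into the constant. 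Applying Lemma~\ref{lem:discK2d} to each representative with failure probability $\delta/N$ and union bounding yields
\[
\Pr\!\left[\max_{h^\star \in \c{H}^\star_w} \disc_\chi(P,h^\star) > C\sqrt{(\ell/w) \log(2N/\delta)}\right] \le \delta.
\]
Since $\log N = O(\log n)$ in the natural regime where $\ell/w$ is polynomial in $n$, combining this with the perturbation bound gives $\Psi(n,\delta) = O\bigl(\sqrt{(\ell/w) \log(n/\delta)}\bigr)$, as required.

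The main obstacle is balancing the grid resolution against the union-bound penalty: $\eta_b$ and $\eta_\theta$ must be small enough that $n\varsigma(\eta_b + \ell\eta_\theta) \ll \Psi(n,\delta)$, yet coarse enough that $\log N$ stays $O(\log n)$. Because the perturbation bound is linear in the resolutions while $\log N$ is only logarithmic in their reciprocals, a polynomial-in-$1/n$ resolution meets both constraints. A cleaner alternative I would consider if the grid becomes cumbersome is to use the combinatorial arrangement of the $O(n)$ curves in $(\theta,b)$-space corresponding to point-on-slab-boundary incidences: this arrangement has $O(n^2)$ cells in each of which $\disc_\chi(P,h)$ is a smooth function of $(\theta,b)$, attains its maximum at a vertex, and thus provides $O(n^4)$ canonical representatives directly — with no need for a separate perturbation bound.
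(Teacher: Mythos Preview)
Your approach is correct and follows the same high-level strategy as the paper: build a finite net of smoothed halfspaces, apply Lemma~\ref{lem:discK2d} to each, and union bound. The difference is in how the net is constructed. You use a data-\emph{independent} grid in the $(\theta,b)$ parameter space, which yields $N = O\bigl(n^4(\ell/w)^2\bigr)$ representatives and hence $\log N = O(\log n + \log(\ell/w))$; to get the stated $\Psi(n,\delta)=O\bigl(\sqrt{(\ell/w)\log(n/\delta)}\bigr)$ you need the side assumption $\ell/w = \mathrm{poly}(n)$ (which, as you note, holds automatically in the non-trivial regime since the discrepancy is always at most $n$). The paper instead builds a data-\emph{dependent} net $\c{S}_\alpha$: for every pair $p,p'\in P$ it takes the line through them, then $O(w/\alpha)=O(n)$ rotations about $p$ (calibrated so that $\|p'-p'_F\|$ changes by $\alpha/2$ per step) and $O(n)$ translations, giving $|\c{S}_\alpha|=O(n^4)$ with no $\ell/w$ factor at all. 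The covering argument then picks, for a given $h$, the two slab points maximizing $\|p_F-p'_F\|$ and snaps to their nearest grid element. Your grid is simpler and more routine; the paper's construction is a bit more delicate but removes the extraneous $\log(\ell/w)$ term cleanly. Your alternative sketch via the arrangement of slab-boundary incidences is in the right spirit, but note that within a cell $\disc_\chi(P,h)$ is smooth yet not linear in $(\theta,b)$, so the maximum need not sit at a vertex; that route would still need a local perturbation step.
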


\begin{proof}
We define a net of smoothed halfspaces $\c{S}_\alpha \subset \c{H}_w$ where any smoothed halfspace $h \in \c{H}_w$ assigns a value $v_h(p)$ to a point $p\in P$, then there always exists a smoothed halfspace $s \in \c{S}_\alpha$ such that $\forall_{p \in P} |v_h(p) - v_s(p)| \leq \alpha \varsigma$. Since there are only $|P| = n$ points, the difference $\sum_{p \in P} |v_h(p) - v_s(p)|$ is no more than $n \alpha \varsigma$ . By setting $\alpha = 1/n\varsigma$ we can ensure that $\disc_\chi(P, h) < \disc_\chi(P,s) +1$. Thus if all $s \in \c{S}_\alpha$ have small discrepancy, then all smoothed halfspaces in $\c{H}_w$ have small discrepancy.  

We now describe a construction of $\c{S}_\alpha$ (illustrated in Figure \ref{fig:stripnet}) of size at most $O(n^4)$ and then apply the union bound in Lemma \ref{lem:discK2d} to only increase the discrepancy in that bound by a $\sqrt{\log n}$ factor. 
First consider the halfspace with boundary passing through each pair of points $p,p' \in P$. For each such halfspace, and for each point ($p$ or $p'$) it passes through, consider $4w/\alpha$ rotations around that point (wlog $p$).  Make the increment of the rotation such that the closest point $p'_F$ on the rotated boundary $F$ increases a distance $\|p' - p'_F\|$ of $\alpha/2$ in each next rotation.  That is, the projection distance $\|p' - p'_F\|$ on each rotation around $p$ is a distance of $\alpha/2, \alpha, 3\alpha/2, ..., 2w$; this is repeated in each direction.  
Now, for each rotated halfspace, consider $4w/\alpha$ translations in the direction normal to the halfspace.  There are $2w/\alpha$ translations in the normal direction, and its opposite, at increments of $\alpha/2$ (e.g., $\alpha/2$, $\alpha$, $3\alpha/2$, ... $2w$). 

Since $\alpha = 1/n\varsigma$ and $w = O(1/\varsigma)$, then $4w/\alpha =O(n)$. 
Thus the size of $\c{S}_\alpha$ is $O(n^4)$: for each of $O(n^2)$ pairs, there are $O(n)$ rotations and for each rotations there are $O(n)$ translations.  

We now show for any $h \in \c{H}_w$ how to map to the smoothed halfspace in $s \in \c{S}_\alpha$ such that for all $p \in P$ that $|v_h(p) - v_s(p)| \leq \alpha \varsigma$.  
First consider all points $P \cap S_w$, where $S_w$ is the slab defined by $h$.  If the slab is empty then the closest two points $p,p' \in P$ would generate one translation and rotation $s \in S_\alpha$ that moved both of them out of the slab, causing all of the same values $v_h(p_i) = v_s(p_i) \in \{0,1\}$.  
Otherwise, for any point $p$ in the slab, there exists some rotation moving $p_F$ by at most $\alpha/2$ and another rotation moving $p_F$ by at most $\alpha/2$ resulting in $|v_h(p) - v_s(p)| \leq \|p-p_F\| \cdot \varsigma \leq (\alpha/2+\alpha/2) \cdot \varsigma = \alpha\varsigma$.  However, we need to ensure this holds for all points simultaneously.  The translations affect $\|p-p_F\|$ for all points the same (at most $\alpha/2$), but the rotations can affect further away points by more.  Thus, we choose the two points $p,p' \in S_\alpha$ that maximize $\|p_F - p'_F\|$, and consider the closest rotation of $h$ to one of the smoothed halfspaces $s \in \c{S}_\alpha$ that they generate.  The rotation will affect all other points less than it will those two, and thus at most $\alpha/2$, as desired.  

Finally we set the probability of failure in Lemma \ref{lem:discK2d} as $\delta' = \Omega(\delta/|\c{S}_\alpha|)$ for each smoothed halfspace.  This implies that for $\Psi(n,\delta)=C\sqrt{\frac{\ell}{w}\log(2/\delta'}) = O\bigg(\sqrt{\frac{\ell}{w}\log \frac{n}{\delta}}\bigg)$, the $\Pr[\disc_\chi(P,\c{H}_w) > \Psi(n,\delta)] \leq \delta$.
\end{proof}

\begin{figure}[ht!]
\begin{center}   \includegraphics[width=1.5in]{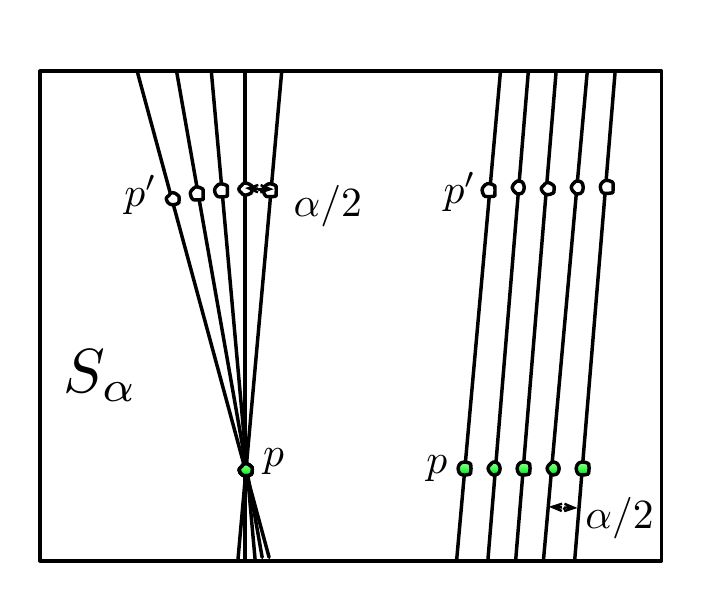}
\end{center}
    \caption{Illustration of the shifts and rotations to create $S_\alpha$. }
  \label{fig:stripnet}
\end{figure}

\subsection{$\eps$-Samples for Smoothed Halfspaces}
To transform this discrepancy algorithm to $\eps$-samples, let $f(n) = \disc_\chi (P, \c{H}_w)/n$ be the value of $\eps$ in the $\eps$-samples generated by a single coloring of a set of size $n$. Solving for $n$ in terms of $\eps$, the sample size is $s(\eps) = O(\frac{1}{\eps} \sqrt{\frac{\ell}{w} \log\frac{\ell}{w\eps\delta}})$.  We can then apply the \emph{MergeReduce} framework~\cite{CM96}; iteratively apply this random coloring in $O(\log n)$ rounds on disjoint subsets of size $O(s(\eps))$.  Using a generalized analysis (c.f., Theorem 3.1 in \cite{phillips2008algorithms}), we have the same $\eps$-sample size bound.

\begin{theorem}
\label{thm:epssample}
For $P \subset \s{C}_{\ell,2} \subset \b{R}^2$, with probability at least $1-\delta$, we can construct an $\eps$-sample of $(P, \c{H}_w)$ of size $O\Big(\frac{1}{\eps} \sqrt{\frac{\ell}{w} \log\frac{\ell}{w\eps\delta}}\Big)$.
\end{theorem}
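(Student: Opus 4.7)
The plan is to convert the per-coloring discrepancy bound of Theorem \ref{thm:net2} into an $\eps$-sample construction via the standard halving / \emph{MergeReduce} framework of \cite{CM96}. The key observation is that for any even-sized set $T$ with a balanced coloring $\chi : T \to \{-1,+1\}$, the subset $T_+ = \{p \in T : \chi(p) = +1\}$ of size $|T|/2$ satisfies
\[
\max_{h \in \c{H}_w} \bigl|\sde_T(h) - \sde_{T_+}(h)\bigr| \;=\; \frac{\disc_\chi(T,\c{H}_w)}{|T|}.
\]
Thus a single halving of a set $T$ of size $m$ is, with probability at least $1-\delta'$, a $(\Psi(m,\delta')/m)$-sample of $T$ in $(T,\c{H}_w)$. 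Inverting the relation $\eps = \Psi(m,\delta')/m$ and absorbing a logarithmic factor via one round of bootstrapping gives the base sample size $s(\eps) = O\bigl(\tfrac{1}{\eps}\sqrt{(\ell/w)\log(\ell/(w\eps\delta))}\bigr)$.

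Next I would run MergeReduce on $P$: partition $P$ into blocks of size $2s(\eps)$, halve each via the above random coloring to obtain blocks of size $s(\eps)$, then iteratively pair such blocks, merge into size-$2s(\eps)$ sets, and halve again, for $O(\log(n/s(\eps)))$ levels, until a single set of size $s(\eps)$ survives. The error analysis is the standard telescoping argument of Theorem~3.1 in \cite{phillips2008algorithms}: each level's per-set discrepancy is weighted by the fraction of $P$ represented by that set, and since the per-coloring error $f(m) = \Psi(m,\delta')/m$ decreases with $m$ while block sizes grow geometrically from $s(\eps)$ upward, the weighted sum of errors across levels is a geometric series dominated by the base-level term and stays $O(\eps)$.

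The main obstacle will be controlling the failure probability across all colorings performed inside the recursion. A total of $O(n/s(\eps))$ halvings are executed over all levels, so I would set the per-coloring failure probability to $\delta' = \Theta(\delta \, s(\eps)/n)$ and apply a union bound. This only replaces $\log(n/\delta)$ inside the square root of $\Psi$ with $\log(n/(\delta'\,s(\eps)))$, a polylogarithmic change that is reabsorbed into the existing $\log(\ell/(w\eps\delta))$ term (using that the construction is only useful when $n$ is not much larger than $\mathrm{poly}(\ell/(w\eps))$, or by bootstrapping the size bound once). The net result is an $\eps$-sample of $(P,\c{H}_w)$ of the claimed size $O\bigl(\tfrac{1}{\eps}\sqrt{(\ell/w)\log(\ell/(w\eps\delta))}\bigr)$ with probability at least $1-\delta$.
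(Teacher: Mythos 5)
Your proposal matches the paper's proof: both convert the per-coloring discrepancy bound of Theorem~\ref{thm:net2} into an $\eps$-sample via the MergeReduce halving framework of \cite{CM96}, invert $\eps = \Psi(m,\delta')/m$ to get $s(\eps)$, and cite Theorem~3.1 of \cite{phillips2008algorithms} for the telescoping error analysis. The only difference is that you spell out the union bound over all colorings and the bootstrapping of the log term, details the paper delegates to the cited reference.
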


To see that these bounds make rough sense, consider a random point set $P$ in a unit square (so $\ell=1$).  
Then setting $w = 1/n$ will yield roughly $O(1)$ points in the slab (and should roughly revert to the non-smoothed setting); this leads to $\disc_\chi(P,\c{H}_w) = O(\sqrt{n} \sqrt{\log(n/\delta)})$ and an $\eps$-sample of size $O((1/\eps^2)\sqrt{\log(1/\eps\delta)})$, basically the random sampling bound.  
But setting $w = \eps$ so about $\eps n$ points are in the slab (the same amount of error we allow in an $\eps$-sample) yields $\disc_\chi(P,\c{H}_w) = O((1/\sqrt{\eps n}) \cdot \sqrt{\log(n/\delta)})$ and the size of the $\eps$-sample to be $O(\frac{1}{\eps} \sqrt{\log(1/\eps\delta)})$, which is a large improvement over $O(1/\eps^{4/3})$, and the best bound known for non-smoothed range spaces~\cite{Mat99}.  

\subsection{Adaptive Bounds for Non-Uniform Distributed Data}
\label{app:adaptive}
However, the assumption that $P \subset \s{C}_{\ell,2}$ (although not uncommon~\cite{Mat99}) can be restrictive.  In this section, we attempt to relax this assumption.  We do not see how to completely remove some such assumption using our suite of techniques since it could be all of the data lies very close to a line $l$, and then a halfspace boundary similar to that line $l$ will have all of the points within the slab.  In this case, we should not expect much better than with binary range spaces unless we make $w$ much larger than the average deviation of points from the line $l$.  

However, we can do better, if the data is ``well-clustered''.  That is, consider partitioning the data into subsets $P_1, P_2, \ldots, P_k$ so that each $P_i$ is contained in an $(\ell_k,2)$-cube.   Then we can replace $\ell$ in the previous bound with a value $\Phi_k = \max\{k-1, k \cdot \ell_k\}$.  In particular, let $\Phi = \min_{k \geq 1} \max \Phi_k$.  
We can then bound the contribution of each $(\ell_k,2)$ cube towards $\rho(S_w,M)$ as $O(\ell w)$ using Lemma \ref{lem:match-slab}, and the sum of them as $\Phi_k = \max\{k-1, k \cdot \ell_k\}$ since there will at most $k-1$ edges between these $k$ boxes.  In Lemma \ref{lem:discK2d} this yields $\Pr\Big[\disc_\chi(P,h) = O(\sqrt{(\Phi/w) \log(1/\delta)})\Big] \leq \delta$, and eventually with probability $1-\delta$ an $\eps$-sample of size $O\big((1/\eps) \sqrt{(\Phi/w) \log(1/\eps \delta)}\big)$ in place of Theorem \ref{thm:epssample}.  

\begin{theorem}
\label{thm:phi-epssample}
Consider a partition of $P = \bigcup_i P_i$ for $P \subset \b{R}^2$, so each $P_i$ is in a $(\ell_k,2)$-cube, and letting $\Phi_k = \max\{k-1, k \cdot \ell_k\}$ and $\Phi = \min_{k \geq 1} \Phi_k$.  
Then with probability at least $1-\delta$, we can construct an $\eps$-sample of $(P, \c{H}_w)$ of size $O\Big(\frac{1}{\eps} \sqrt{\frac{\Phi}{w} \log\frac{\Phi}{w\eps\delta}}\Big)$.
\end{theorem}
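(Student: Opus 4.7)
The plan is to mirror the pipeline used for Theorem \ref{thm:epssample}, replacing the global containment assumption $P \subset \s{C}_{\ell,2}$ with a cluster-wise structural assumption. Let $k^\star$ be the index minimizing $\Phi_k$ and fix a partition $P = P_1 \cup \cdots \cup P_{k^\star}$ with every $P_i$ in an $(\ell_{k^\star},2)$-cube. Rather than use the global min-cost matching of $P$, I would build a matching $M$ in two stages: first, compute a min-cost matching of each $P_i$ individually, which leaves at most one unmatched point per cluster; second, pair up the leftover singletons in any fashion, yielding at most $\lceil k^\star/2 \rceil \le k^\star - 1$ inter-cluster edges. Color $M$ at random, one endpoint $+1$ and one $-1$, exactly as in Section \ref{sec:m2disc}.

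The key step is an analogue of the slab bound: $\rho(S_w, M) = O(\Phi \cdot w)$. For a fixed slab $S_w$ of some $h \in \c{H}_w$, the intra-cluster edges of $M$ restricted to $P_i$ are precisely a min-cost matching of $P_i$, so Lemma \ref{lem:match-slab} applied separately to each $(\ell_{k^\star},2)$-cube contributes $O(\ell_{k^\star} \cdot w)$, for a total of $O(k^\star \ell_{k^\star} w)$ across all clusters. Each of the $\le k^\star - 1$ inter-cluster edges contributes at most $(2w)^2$ to $\rho$ thanks to the explicit cap in the definition of $\rho$, so their joint contribution is $O(k^\star w^2)$; in the operative regime where $w$ is no larger than the cube scale, this is absorbed into $O((k^\star - 1) w)$. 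Taken together, $\rho(S_w, M) = O(\Phi \cdot w)$, which is the cluster-aware replacement of the $O(\ell w)$ bound used in Lemma \ref{lem:discK2d}.

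From here, the remainder of the proof is a drop-in substitution of $\Phi$ for $\ell$ in the existing chain. The Chernoff computation of Lemma \ref{lem:discK2d} is unchanged and yields $\Pr[\disc_\chi(P,h) > \alpha] \le 2\exp(-\alpha^2 / O(\Phi/w))$, so a single fixed $h$ has discrepancy $O(\sqrt{(\Phi/w)\log(1/\delta)})$. The $O(n^4)$-size net of representative smoothed halfspaces built in Theorem \ref{thm:net2} never invoked the ambient cube, so it may be reused verbatim; a union bound yields $\disc_\chi(P,\c{H}_w) = O(\sqrt{(\Phi/w)\log(n/\delta)})$ with probability at least $1-\delta$. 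Finally, the MergeReduce recurrence from the proof of Theorem \ref{thm:epssample} converts this single-coloring bound into an $\eps$-sample of the claimed size $O((1/\eps)\sqrt{(\Phi/w)\log(\Phi/(w\eps\delta))})$.

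The principal obstacle is the use of a non-globally-optimal matching in step one: Lemma \ref{lem:match-slab} is stated for \emph{the} min-cost matching of its input set, whereas $M$ is only locally optimal on each $P_i$. I expect the proof of Lemma \ref{lem:match-slab} to apply cluster-by-cluster since the underlying Bern--Eppstein inequality is a statement about min-cost matchings restricted to a cube, and our $M$ restricted to $P_i$ is exactly such a matching. The cross-cluster bookkeeping then has to check that the $O(k^\star w^2)$ surplus never dominates the $O(\Phi w)$ target under the $w$-regime of interest; this is the most delicate place and the only step whose constants I would verify carefully.
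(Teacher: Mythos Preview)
Your proposal is correct and follows essentially the same route as the paper: bound each cluster's contribution to $\rho(S_w,M)$ via Lemma~\ref{lem:match-slab}, charge at most $k-1$ inter-cluster edges separately, and then rerun the Chernoff/net/MergeReduce pipeline of Lemmas~\ref{lem:discK2d}--Theorem~\ref{thm:epssample} with $\Phi$ in place of $\ell$. Your explicit two-stage matching (per-cluster min-cost plus pairing of leftovers) and the $(2w)^2$-cap argument for the cross edges actually make precise what the paper's sketch leaves implicit when it asserts ``there will be at most $k-1$ edges between these $k$ boxes.''
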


\vspace{-2mm}

We can compute a $2\sqrt{2}$-approximation to $\Phi$ in $O(nk_{\max}^2)$ time, where $k_{\max}$ is the largest value of $k$ we consider ($k_{\max} = \log n$ may be a good choice).  Our algorithm will only use axis-aligned cubes which is a $\sqrt{2}$-approximation to more generally allowing rotated cubes to fit each $P_i$.  We simply run the $k$-clustering of \cite{Gon85} using the $L_\infty$ metric.  That is, we start with an arbitrary point $p\in P$ to place in a set $W_1$; this represents the center of the smallest $(\ell,2)$-cube that fits all data.  Then we inductively, choose $p_k = \arg \max_{p \in P} \min_{w \in W_{k-1}} \|w-p\|_\infty$, and create $W_k = W_{k-1} \cup p_k$.  At any stage $\ell_k = 2 \cdot \max_{p \in P} \min_{w \in W_{k-1}} \|w-p\|_\infty$, and $\Phi_k = \max\{k-1,\ell_k \cdot k\}$.  

\paragraph{Non-linear clusters.}
We can also observe a slightly tighter bound.  If there are $k$ $(\ell_k,2)$-cubes, but they are not all near a single line, then they cannot all contribute to the discrepancy.  Given a partition $P = \bigcap_i P_i$ where each $P_i$ is in a $(\ell_k, 2)$-cube, let $\kappa_k$ describe the maximum number of these cubes that a single slab $S_w$ can intersect.  Then we can use $\bar \Phi_k = \max\{k-1, \kappa_k \cdot \ell_k\}$ in place of $\Phi_k$.  
However, it is less clear the best way to construct an approximation to $\bar \Phi_k$ and $\bar \Phi = \min_{k\geq1} \bar \Phi_k$.  

As another thought experiment, consider all of the points are in $\s{C}_{\ell,2}$.  
We can now decompose this square into $w^2$ smaller squares, each of side length $\ell/w$.  
Any slab $S_w$ can only pass through $O(w)$ smaller squares; thus $\bar \Phi_{w^2} = \min\{w-1,O(w) \cdot \ell/w\} = O(\ell)$.  So we recover the original non-adaptive bound.  

One may wonder if this can be improved if many of the $w^2$ squares are empty.  
If there are $O(1)$ non-empty squares, then $\Phi$ already captures this improved bound.  
If there is still a slab $S_w$ pass through $\Omega(w)$ squares, then this $\bar \Phi$ bound again does not improve over the non-adaptive one.  
However if there are $\Theta(w)$ non-empty squares, and no slab $S_w$ passes through more than $O(1)$ of them (e.g., they are all on the boundary of $\textsf{C}_{\ell,2}$), then $\bar \Phi$ improves the bound over $\Phi$ by a factor of $w$.  Thus the $\bar \Phi$ approach can improve the bound in certain settings.  

\subsection{Generalization to $d$ Dimensions}
\label{sec:extd}

We now extend from $\b{R}^2$ to $\b{R}^d$ for $d > 2$.  
Using results from Section \ref{sec:match} we implicitly get a bound on $\sum_j \Delta_j^d$, but the Chernoff bound we use requires a bound on $\sum_j \Delta_j^2$.  
As in \cite{Phillips2013}, we can attain a weaker bound using Jensen's inequality over at most $n$ terms
\begin{equation}
\left(\sum_j \frac{1}{n} \Delta_j^2\right)^{d/2} \leq \sum_j \frac{1}{n} \left(\Delta_j^2\right)^{d/2}
\;\; \text{ so } \;\;
\sum_j \Delta_j^2 \leq n^{1-2/d} \left(\sum_j \Delta_j^d\right)^{2/d}.
\label{eq:Jensen}
\end{equation}
Replacing this bound and using $\rho(S_w,M) \leq O(\ell^{d-1} w)$ in Lemma \ref{lem:discK2d} and considering $\varsigma = c_1/w$ for some constant $c_1$ results:

\begin{lemma}
\label{lem:discKd}
Assume $P \subset \b{R}^d$ is contained in some cube $\textsf{C}_{\ell,d}$ and with min-cost matching $M$, and consider a $\varsigma$-bounded smoothed halfspace $h \in \c{H}_w$ associated with slab $S_w$.  Let $\rho(S_w,M) \leq c_2(\ell^{d-1} w)$ for constant $c_2$.  
Then $\Pr \Big[\disc_\chi(P,h) > C n^{1/2 - 1/d} (\ell/w)^{1-1/d}  \sqrt{ \log(2/\delta)} \Big] \leq \delta$ for any $\delta > 0$, where $C = \sqrt{2}c_1(c_2)^{1/d} $ is a constant.  
\end{lemma}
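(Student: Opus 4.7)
The plan is to mirror the proof of Lemma~\ref{lem:discK2d}, with the key modification being the use of Jensen's inequality (\ref{eq:Jensen}) to convert the $d$th-power bound on edge lengths coming from Lemma~\ref{lem:match-slab} into the second-power bound required by the Chernoff inequality (\ref{eq:CH}).

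First I would bound the per-edge contribution. Since $v_h$ has gradient bounded by $\varsigma = c_1/w$, and since any edge $\{p_j,q_j\}$ with both endpoints outside the slab $S_w$ contributes $v_h(p_j) = v_h(q_j)$, we have $|v_h(p_j) - v_h(q_j)| \leq \varsigma \cdot \min\{2w, \|p_j - q_j\|\}$ in exactly the sense captured by the definition of $\rho(S_w,M)$. Raising to the $d$th power and summing,
\[
\sum_j \Delta_j^d \;=\; \sum_j 2^d |v_h(p_j)-v_h(q_j)|^d \;\leq\; 2^d \varsigma^d \, \rho(S_w,M) \;\leq\; 2^d c_1^d c_2 \, \ell^{d-1}/w^{d-1},
\]
using the hypothesis $\rho(S_w,M) \leq c_2 \ell^{d-1} w$ and $\varsigma \leq c_1/w$.

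Next I would apply the Jensen's inequality step (\ref{eq:Jensen}), which is the key new ingredient compared to the planar argument. It gives
\[
\sum_j \Delta_j^2 \;\leq\; n^{1-2/d}\left(\sum_j \Delta_j^d\right)^{2/d} \;\leq\; n^{1-2/d}\cdot 4\, c_1^2 c_2^{2/d}\,(\ell/w)^{2-2/d}.
\]
Plugging this into the Chernoff bound (\ref{eq:CH}) for a single smoothed halfspace $h$ and solving $2\exp(-2\alpha^2/\sum_j \Delta_j^2) \leq \delta$ for $\alpha$ yields the stated threshold with $C = \sqrt{2}\,c_1 c_2^{1/d}$.

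The main obstacle is really just the bookkeeping of constants and making sure that the $\min\{2w,\|p_j-q_j\|\}$ truncation in the gradient estimate matches the $\min\{(2w)^d, \|p_B-q_B\|^d\}$ definition of $\rho$; this is harmless because points whose edges do not enter $S_w$ contribute $0$ to $\Delta_j$ (their $v_h$ values agree), while for edges meeting $S_w$ the gradient-based estimate on $v_h$ is tight up to the kernel-slope constant. No new probabilistic machinery is needed beyond what was used in Lemma~\ref{lem:discK2d}.
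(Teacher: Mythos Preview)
Your plan is correct and follows essentially the same route as the paper's proof: bound $\sum_j \Delta_j^d$ via the Lipschitz estimate and $\rho(S_w,M)\le c_2\ell^{d-1}w$, convert to $\sum_j \Delta_j^2$ using Jensen's inequality (\ref{eq:Jensen}), and substitute into the Chernoff bound (\ref{eq:CH}) to read off $\alpha$ with $C=\sqrt{2}\,c_1 c_2^{1/d}$. Your remark about edges whose segment does not meet $S_w$ contributing $\Delta_j=0$ is the right way to reconcile the gradient estimate with the definition of $\rho$, and the paper uses the same reasoning implicitly.
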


\begin{proof}
Using the gradient of $v_h$ is at most $\varsigma = c_1/w$ and $|v_h(p_j) - v_h(q_j)| \leq \varsigma \max\{2w, \|p_j-q_j\|\}$ we have
\[
\sum_j \Delta_j^d 
= 
\sum_j 2^d(v_h(p_j) - v_h(q_j))^d 
\leq 
2^d \varsigma^d \rho(S_w,M) 
\leq 
2^dc_1^d/w^d \cdot c_2\ell^{d-1} w
=2^dc_1^d c_2 (\ell/w)^{d-1},
\]
where second inequality follows by Lemma \ref{lem:match-slab} that $\rho(S_w, M) = \sum_j \max\{(2w)^d, \|p_j - q_j\|^d\} \leq c_2 (\ell^{d-1} w)$. Hence, by Jensen's inequality (i.e. (\ref{eq:Jensen}))
\[\sum_j \Delta_j^2 \leq n^{1-2/d} (2^d c_1^dc_2(\ell/w)^{d-1})^{2/d} = n^{1-2/d} 4c_1^2(c_2)^{2/d} (\ell/w)^{2(d-1)/d}.
\]  
We now study the random variable $\disc_\chi(P,h) = |\sum_i X_i|$ for a single $h \in \c{H}_w$.  
Invoking (\ref{eq:CH}) we can bound
\[
\Pr[\disc_\chi(P,h) > \alpha] \leq 2 \exp(-\alpha^2/n^{1-2/d} 2c_1^2(c_2)^{2/d} (\ell/w)^{2(d-1)/d}.
\] 
Setting $C = \sqrt{2}c_1(c_2)^{1/d} $ and 
\[
\alpha = C n^{1/2 - 1/d}  (\ell/w)^{1-1/d}  \sqrt{ \log(2/\delta)}
\] 
reveals 
\[
\Pr \Big[\disc_\chi(P,h) > C n^{1/2 - 1/d} (\ell/w)^{1-1/d}  \sqrt{ \log(2/\delta)}\Big] \leq \delta. 
\]  
\end{proof}

For all choices of smoothed halfspaces, applying the union bound, the discrepancy is increased by a $\sqrt{\log n}$ factor, with the following probabilistic guarantee, 
\[\Pr[\disc_\chi(P,\c{H}_w) > C n^{1/2 - 1/d} (\ell/w)^{1-1/d}  \sqrt{ \log(n/\delta)}] \leq \delta.\] 
Ultimately, we can extend Theorem \ref{thm:epssample} to the following.  
\begin{theorem}
\label{thm:epssample-d}
For $P \subset \s{C}_{\ell,d} \subset \b{R}^d$, where $d$ is constant, with probability at least $1-\delta$, we can construct an $\eps$-sample of $(P, \c{H}_w)$ of size 
$O\Big((\ell/w)^{2(d-1)/(d+2)} \cdot \Big(\frac{1}{\eps} \sqrt{\log\frac{\ell}{w\eps\delta}}\Big)^{2d/(d+2)}\Big)$.
\end{theorem}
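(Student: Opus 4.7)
The plan is to mirror the structure used for the $d=2$ case in Theorem \ref{thm:epssample}, replacing each ingredient by its higher-dimensional analogue that has already been set up in the paper. The three ingredients are: (i) a single-halfspace discrepancy bound, (ii) a union bound over a polynomial-sized net of smoothed halfspaces, and (iii) the \emph{MergeReduce} reduction from a single coloring to an $\eps$-sample. Ingredient (i) is exactly what Lemma \ref{lem:discKd} already delivers, thanks to Lemma \ref{lem:match-slab} controlling $\rho(S_w,M)$ in $\b{R}^d$ and to Jensen's inequality (\ref{eq:Jensen}) converting the $\sum_j \Delta_j^d$ bound into one on $\sum_j \Delta_j^2$.

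The main task is therefore (ii): extending the net construction from the proof of Theorem \ref{thm:net2} to $\b{R}^d$. A smoothed halfspace in $\c{H}_w$ is parametrised by the bounding $(d{-}1)$-flat $F$ (that is, by a unit normal plus an offset, so $d$ continuous parameters). I would anchor candidate boundaries by choosing $d$ of the input points to lie on $F$ (there are $O(n^d)$ such choices), then, fixing $d{-}1$ of those points, perform $O(w/\alpha)$ rotations of $F$ about the $(d{-}2)$-flat they span, so that the closest point $p_F$ of the remaining anchor point advances in $F$-perpendicular increments of $\alpha/2$ up to $2w$; finally, for each rotated boundary, perform $O(w/\alpha)$ translations of $F$ along its normal in increments of $\alpha/2$ up to $2w$. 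Setting $\alpha=1/(n\varsigma)$ and using $w = O(1/\varsigma)$ gives a net $\c{S}_\alpha$ of size $n^{O(d)}$, which is polynomial in $n$ since $d$ is constant. As in the 2D case, one checks that for every $h\in\c{H}_w$ there exists $s\in\c{S}_\alpha$ with $|v_h(p)-v_s(p)|\le \alpha\varsigma$ for every $p\in P$, so that $\disc_\chi(P,h) \le \disc_\chi(P,s)+1$; the reason this holds simultaneously for all points is that the translations move every $\|p-p_F\|$ by at most $\alpha/2$, and the rotations can be chosen to track the pair of points that maximise $\|p_F - p'_F\|$ (all other points are affected by strictly less). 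A union bound over $\c{S}_\alpha$ then inflates $\log(1/\delta)$ in Lemma \ref{lem:discKd} to $\log(n/\delta)$, yielding
\[
\Pr\!\Big[\disc_\chi(P,\c{H}_w) > C\, n^{1/2-1/d}(\ell/w)^{1-1/d}\sqrt{\log(n/\delta)}\,\Big] \le \delta.
\]

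For ingredient (iii), divide by $n$ to get $f(n) = O(n^{-1/2-1/d}(\ell/w)^{(d-1)/d}\sqrt{\log(n/\delta)})$ as the $\eps$ achievable by a single coloring, and invert. Setting $f(n)=\eps$ and solving gives
\[
n = O\!\left((\ell/w)^{2(d-1)/(d+2)} \cdot (1/\eps)^{2d/(d+2)} \cdot (\log(\ell/(w\eps\delta)))^{d/(d+2)}\right),
\]
which rearranges into the stated bound. Then the standard \emph{MergeReduce} scheme of \cite{CM96}, iterated for $O(\log n)$ rounds as in the derivation preceding Theorem \ref{thm:epssample}, produces an $\eps$-sample of the same asymptotic size with the failure probabilities summed via a geometric series absorbed into $\delta$.

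The main obstacle is getting the $d$-dimensional net construction clean: one must verify both that $|\c{S}_\alpha|$ is polynomial in $n$ and that for every $h\in\c{H}_w$ one can simultaneously approximate $v_h(p)$ at every $p\in P$ by a single $s\in\c{S}_\alpha$. The simultaneity requirement is delicate because a rotation that is tight for one point may be loose for another; as in the 2D proof this is handled by choosing the rotation increments relative to the two anchor points that maximise $\|p_F - p'_F\|$, so that every other point has smaller rotational error. Once this technical lemma is in hand the rest of the argument is a mechanical substitution into the $d$-dimensional analogues of Lemma \ref{lem:discK2d} and Theorem \ref{thm:net2}.
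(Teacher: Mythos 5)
Your proposal follows exactly the three-step pipeline the paper uses: Lemma \ref{lem:discKd} for single-halfspace discrepancy, a union bound over a polynomial-size net of smoothed halfspaces, and the MergeReduce reduction; the inversion of $f(n)$ at the end also matches the stated size bound. The paper itself leaves the $d$-dimensional version of the net construction implicit --- it states only that a union bound inflates the discrepancy by $\sqrt{\log n}$, implicitly generalizing the $O(n^4)$-size net from Theorem \ref{thm:net2} --- and your explicit $n^{O(d)}$-size net ($O(n^d)$ anchors $\times\, O(n)$ rotations $\times\, O(n)$ translations) is the natural way to fill that in; since $d$ is constant, it still yields only an $O(\sqrt{\log n})$ inflation.

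The one step you should scrutinize is the simultaneity of the rotational error when $d > 2$. In the 2D proof of Theorem \ref{thm:net2}, a rotation about a single anchor $p$ moves any other point $p''$ perpendicular to $F$ by an amount proportional to the scalar projection distance $\|p''_F - p_F\|$, so choosing the two extreme points dominates all others. In $\b{R}^d$ a rotation about the $(d-2)$-flat $L$ spanned by your $d-1$ fixed anchors moves a point $p''$ by an amount proportional to the distance from $p''_F$ to $L$ within $F$, which is a genuine $(d-1)$-dimensional distance, not the ordered scalar you can control by picking a single ``farthest pair.'' You would need to argue that the anchors can be chosen so that the free anchor $a_d$ has the maximum such distance among all slab points --- this requires a $d$-point analogue of the 2D extreme-pair selection and is not quite the statement you wrote. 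A less delicate route that gives the same conclusion is to grid the $d$-dimensional parameter space of bounding $(d-1)$-flats (unit normal in $S^{d-1}$ plus offset restricted to $O(\ell)$) at resolution $O(\alpha)$ with $\alpha = w/n$; since $v_h$ is $\varsigma$-Lipschitz in the boundary's placement this directly gives an $n^{O(1)}$-size net with the required per-point approximation, and the rest of your argument is unchanged.
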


If the data is ``well-clustered'' in high dimension, we can get a similar adaptive bounds as Theorem \ref{thm:phi-epssample}.  

\begin{theorem}
\label{thm:phi-epssample-d}
Consider a partition of $P = \bigcup_i P_i$ for $P \subset \b{R}^d$, so each $P_i$ is in a $(\ell_k,d)$-cube, and letting $\Phi_k = \max\{k-1, k \cdot \ell_k\}$ and $\Phi = \min_{k \geq 1} \Phi_k$.  
Then with probability at least $1-\delta$, we can construct an $\eps$-sample of $(P, \c{H}_w)$ of size $O\Big((\Phi/w)^{2(d-1)/(d+2)} \cdot \Big(\frac{1}{\eps} \sqrt{\log\frac{\Phi}{w\eps\delta}}\Big)^{2d/(d+2)}\Big)$.
\end{theorem}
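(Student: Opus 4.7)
The plan is to mirror the derivation of Theorem \ref{thm:epssample-d}, but replacing the global matching bound $\rho(S_w,M) = O(\ell^{d-1} w)$ from Lemma \ref{lem:match-slab} by an adaptive bound $\rho(S_w,M) = O(\Phi_k^{d-1} w)$ that exploits the given cluster decomposition, exactly in the spirit of how Theorem \ref{thm:phi-epssample} generalized Theorem \ref{thm:epssample}.

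First I would abandon the global min-cost matching in favor of a hybrid one. Given the partition $P = \bigcup_{i=1}^k P_i$ with each $P_i$ inside a $(\ell_k,d)$-cube, compute a min-cost matching on each $P_i$ separately (leaving at most one unmatched point per cluster of odd size), then pair the at most $k$ leftover points across clusters using at most $k-1$ cross-cluster edges. For any slab $S_w$, I decompose $\rho(S_w,M)$ into the sum of the intra-cluster contributions $\sum_i \rho_i(S_w,M)$ and the inter-cluster contribution. Lemma \ref{lem:match-slab} applied inside each containing cube gives $\rho_i(S_w,M) = O(\ell_k^{d-1} w)$, so the intra-cluster sum is $O(k\ell_k^{d-1} w) \le O((k\ell_k)^{d-1} w)$. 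Each inter-cluster edge is capped at $(2w)^d$ by the very definition of $\rho$, so the cross contribution is $O((k-1)(2w)^d)$; after a harmless normalization ($w\le 1$ say, achievable by rescaling), this absorbs into $O((k-1)^{d-1} w)$. Combining the two pieces yields $\rho(S_w,M) = O(\Phi_k^{d-1} w)$, which is the adaptive analogue of Lemma \ref{lem:match-slab}.

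Next I would plug this new matching bound into the proof of Lemma \ref{lem:discKd}. The constant $c_2$ is just absorbed, and $\ell$ is replaced by $\Phi_k$; the Jensen step of equation (\ref{eq:Jensen}) is identical. Thus, for any fixed $h\in\c{H}_w$,
\[
\Pr\!\Big[\disc_\chi(P,h) > C\, n^{1/2-1/d}(\Phi_k/w)^{1-1/d}\sqrt{\log(2/\delta)}\Big] \le \delta.
\]
As in Theorem \ref{thm:net2} (lifted to $\b{R}^d$ via a straightforward rotation-and-translation net of polynomial size $n^{O(d)}$ in the slab region), a union bound over this net adds a $\sqrt{\log n}$ factor and gives the same bound with $\c{H}_w$ in place of a single $h$. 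Finally, applying the MergeReduce scheme from Theorems \ref{thm:epssample}, \ref{thm:epssample-d} on disjoint subsets and inverting $f(n) = \disc_\chi(P,\c{H}_w)/n$ yields, for each $k$, an $\eps$-sample of size $O((\Phi_k/w)^{2(d-1)/(d+2)} ((1/\eps)\sqrt{\log(\Phi_k/(w\eps\delta))})^{2d/(d+2)})$; taking $k$ to be the minimizer so $\Phi_k = \Phi$ gives the theorem.

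The main obstacle is the first step: ensuring the hybrid matching actually achieves $\rho(S_w,M) = O(\Phi_k^{d-1} w)$ across all parameter regimes. In particular, the inter-cluster cap $(2w)^d$ must be controlled against $\Phi_k^{d-1} w$ without additional assumptions, and the intra-cluster min-cost matchings (which are computed independently inside each cube) must still satisfy Lemma \ref{lem:match-slab}; this latter requires checking that the lemma's proof only uses the restriction of the matching to the cube, which it does. A secondary issue is that the cluster-approximation algorithm of Section \ref{app:adaptive} is stated in $\b{R}^2$ using $L_\infty$ $k$-clustering; the same greedy algorithm works in $\b{R}^d$ with only a constant-factor loss in $\Phi$ (absorbed in the $O(\cdot)$).
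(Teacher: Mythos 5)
Your proposal is correct and follows the route the paper intends. The paper offers no explicit proof of Theorem~\ref{thm:phi-epssample-d} beyond a one-line remark that it follows by combining the adaptive $\Phi_k$ matching bound of Section~\ref{app:adaptive} with the $d$-dimensional discrepancy argument of Section~\ref{sec:extd}, and you have spelled out exactly that combination: replace $\ell$ by $\Phi_k$ in the analogue of Lemma~\ref{lem:match-slab}, feed that into Lemma~\ref{lem:discKd} via Jensen's inequality, union bound over a polynomial-size net of slabs, and then run MergeReduce. Two of your clarifications are worth keeping. First, making the matching an explicit hybrid (per-cluster min-cost matchings, then at most $k-1$ cross-cluster edges pairing leftovers) is the correct way to guarantee the paper's claim that ``there will at most $k-1$ edges between these $k$ boxes''; a single global min-cost matching of $P$ need not have this property when the cubes are close or overlap, but the hybrid has it by construction, and Lemmas~\ref{lem:match-cube}--\ref{lem:match-slab} still apply to each cluster because each $M_i$ is genuinely min-cost for $P_i$ and $\overline{pq}\subset C_i$ by convexity. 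Second, your note that absorbing the cross-cluster term $(k-1)(2w)^d$ into $O(\Phi_k^{d-1}w)$ needs $w=O(1)$ is accurate; the paper already relies on this implicitly in the $d=2$ case of Theorem~\ref{thm:phi-epssample}.
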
 

Note these results address scenario (S3) from the introduction where we want to find a small set (the $\eps$-sample) so that it could be much smaller than the $d/\eps^2$ random sampling bound, and allows generalization error $O(\eps)$ for agnostic learning as described in Section \ref{sec:e-sample-SRS}.  
When $\ell/w$ (or $\Phi/w$) is constant, the exponents on $1/\eps$ are also better than those for binary ranges spaces (see Section \ref{sec:definitions}).

\bibliographystyle{abbrv}
\bibliography{kernel-refs-short}

\end{document}